\newtheorem{lemma}{Lemma}
\newcommand{\Title}{Verifiable blind quantum computing with trapped ions and single photons}
\newcommand{\ion}[2]{\mbox{$^{#2}$#1$^+$}}
\newcommand{\Ca}{\ion{Ca}{43}}
\newcommand{\Sr}{\ion{Sr}{88}}
\newcommand{\hfslevshort}[2]{F\!=\!{#1},\,m_{F}\!=\!{#2}}
\newcommand{\ish}{\mbox{$\sim$}\,}
\newcommand{\ltish}{\protect\raisebox{-0.4ex}{$\,\stackrel{<}{\scriptstyle\sim}\,$}}
\newcommand{\gtish}{\protect\raisebox{-0.4ex}{$\,\stackrel{>}{\scriptstyle\sim}\,$}}
\newcommand{\ketbra}[2]{\mathinner{|{#1}\rangle\langle{#2}|}}
\newcommand{\eqsym}[1]{\protect\raisebox{-0.4ex}{$\,\stackrel{\scriptsize #1}{=}\,$}}
\begin{document}

% Include your paper's title here
\title{\Title{}}
\author{P.~Drmota}
\author{D.~P.~Nadlinger}
\author{D.~Main}
\author{B.~C.~Nichol}
\author{E.~M.~Ainley}
\affiliation{Department of Physics, University of Oxford, Clarendon Laboratory, Parks Road, Oxford OX1 3PU, U.K.}

\author{D.~Leichtle}
\affiliation{Laboratoire d’Informatique de Paris 6, CNRS, Sorbonne Universit\'e, Paris 75005, France}

\author{A.~Mantri}
\affiliation{Joint Center for Quantum Information and Computer Science, University of Maryland, College Park, U.S.}

\author{E.~Kashefi}
\affiliation{School of Informatics, University of Edinburgh, Edinburgh EH8 9AB, United Kingdom.}
\affiliation{Laboratoire d’Informatique de Paris 6, CNRS, Sorbonne Universit\'e, Paris 75005, France}

\author{R.~Srinivas}
\author{G.~Araneda}
\author{C.~J.~Ballance}
\author{D.~M.~Lucas}
\affiliation{Department of Physics, University of Oxford, Clarendon Laboratory, Parks Road, Oxford OX1 3PU, U.K.}

\date{\today}

\begin{abstract}
We report the first hybrid matter-photon implementation of verifiable blind quantum computing.
We use a trapped-ion quantum server and a client-side photonic detection system networked via a fibre-optic quantum link.
The availability of memory qubits and deterministic entangling gates enables interactive protocols without post-selection -- key requirements for any scalable blind server, which previous realisations could not provide.
We quantify the privacy at \ltish$0.03$ leaked classical bits per qubit.
This experiment demonstrates a path to fully verified quantum computing in the cloud.
\end{abstract}

\maketitle

\begin{figure}[ht]
    \centering
    \includegraphics{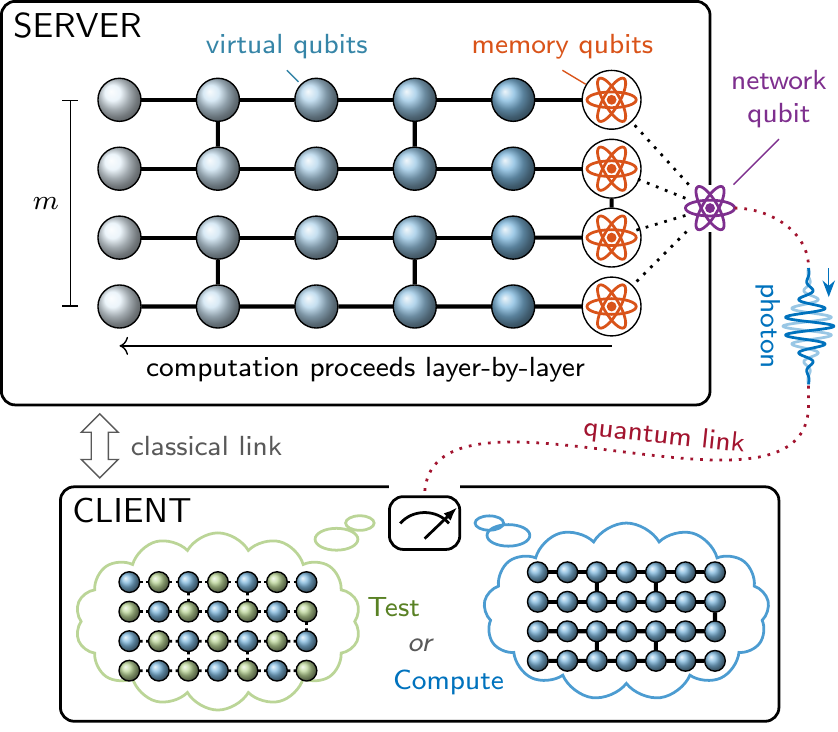}
    \caption{%
        Verifiable blind quantum computing in the measurement-based model.
        The computation is expressed as a sequence of measurements on a brickwork state (two-dimensional graph with vertices representing virtual qubits, and edges indicating $\mathsf{CZ}$ gates).
        The server holds $m$ physical memory qubits (orange atoms) and one physical network qubit (violet atom).
        The server can entangle these qubits deterministically with each other.
        The network qubit can also be entangled with a photon; by measuring this photon, the client can steer the network qubit in the server remotely without the server learning about its state.
        This allows the client to hide the computation (inputs, outputs, and circuit) from the server.
        Moreover, the client can verify that the computation has not been tampered with by (randomly) interleaving test rounds, which produce classically simulatable outcomes and cannot be distinguished from the actual computation by the server.
    }
    \label{fig:Intro}
\end{figure}

% Quantum computing introduction.
%
\noindent%
Quantum computers are poised to outperform the world's most powerful supercomputers, with applications ranging from drug discovery to cybersecurity.
These computers harness quantum phenomena such as entanglement and superposition to perform calculations that are believed to be intractable with classical computers.
As quantum processors control delicate quantum states, they are necessarily complex and physical access to high-performance systems is limited.
Cloud-based approaches, where users can remotely access quantum servers, are likely to be the working model in the near term and beyond;
many users already perform computations on commercially available devices for state-of-the-art research~\cite{sarma_quantum_2019, alcazar_classical_2020, proctor_measuring_2022, amaro_filtering_2022, kirsopp_quantum_2022}.

However, delegating quantum computations to a server carries the same privacy and security concerns that bedevil classical cloud computing.
Users are currently unable to hide their work from the server or to independently verify their results in the regime where classical simulations become intractable.
Remarkably, the same phenomena that enable quantum computing can leave the server \enquote{blind} in a way that conceals the client's input, output, and algorithm~\cite{broadbent_universal_2009, fitzsimons_unconditionally_2017, gheorghiu_verification_2019};
because quantum information cannot be copied and measurements irreversibly change the quantum state, information stored in these systems can be protected with information-theoretic security, and incorrect operation of the server or attempted attacks can be detected -- a surprising possibility which has no equivalent in classical computing.
\Ac{BQC} requires not only a universal quantum computer as the server, but also a quantum link connecting it to the client~\cite{badertscher_security_2020, cojocaru_possibility_2021}.
Photons are a natural choice to provide that link, and indeed the first demonstrations of \ac{BQC} were performed in purely photonic systems~\cite{barz_demonstration_2012, barz_experimental_2013, fisher_quantum_2014, greganti_demonstration_2016}.
However, unavoidable photon loss, either due to limited photon detection efficiencies or absorption in the link, results in potential security risks~\cite{barz_demonstration_2012, fisher_quantum_2014} and places hard limits on the scalability of this approach due to the resource overhead incurred by post-selection~\cite{li_resource_2015}.
Ideally, quantum information at the server should be stored in a stable quantum memory that can be manipulated with high fidelity, yet readily interfaced to a photonic link.
The ability to retain quantum information on the server then enables the client to perform adaptive mid-circuit adjustments in order to execute the target computation deterministically and securely.
Combining two completely different platforms at the single-quantum level is technically challenging~\cite{pfaff_unconditional_2014, hucul_modular_2015}; so far, quantum network nodes with integrated memory qubits have been realised with solid state systems~\cite{kalb_entanglement_2017, stas_robust_2022} and trapped atoms~\cite{wilk_single-atom_2007, drmota_robust_2023}.

% Our approach solves these issues.
Here, we demonstrate \ac{BQC} using a trapped-ion quantum processor (server) that integrates a robust memory qubit encoded in \Ca{} with a single-photon interface based on \Sr{} to establish a quantum link to the client (photon detection system).
We implement an interactive protocol, where the client can remotely prepare single-qubit states on the server adaptively from shot to shot using real-time classical feedforward control.
The complexity needed for universal quantum computation is contained entirely within the server, while the client is a simple photon polarisation measurement device that is independent of the size and complexity of the algorithm and supports near-perfect blindness by construction.
The client and the server are controlled by independent hardware and connected only by a classical signalling bus and an optical fibre.
Our system achieves noise levels below a certain threshold for which arbitrary improvements to the protocol security and success rate (robustness) are theoretically possible~\cite{leichtle_verifying_2021}.

\paragraph{Protocol.}
% Measurement-based framework.
Quantum algorithms can be described in the measurement-based quantum computing model, which prescribes a sequence of measurements on a highly entangled resource state~\cite{raussendorf_one-way_2001, nielsen_cluster-state_2006}.
% Making it blind.
Information-theoretic blindness can be achieved, even against maliciously operating servers, if either the state preparation or the measurements are performed by the client~\cite{childs_unified_2005, broadbent_universal_2009, morimae_blind_2013, fitzsimons_private_2017}.

% Adding verification.
In the presence of noise, even a faithfully operating server produces erroneous results that are indistinguishable from nefarious modifications to the honest protocol~\cite{aharonov_interactive_2017, fitzsimons_unconditionally_2017, broadbent_how_2018, gheorghiu_verification_2019}.
Blindness allows the client to secretly test the quantum resources provided by the server.
The protocol implemented here achieves this by interleaving \enquote{computation} and \enquote{test} rounds.
% Handling noisy devices.
A statistical argument provides bounds for the security and robustness of this protocol for the important class of \ac{BQP} decision problems~\cite{leichtle_verifying_2021}.
The client accepts a result if the observed fraction of failed test rounds, $p_\mathrm{fail}$, is below a chosen threshold, $\omega$.
If $\omega$ is below the theoretical threshold $\omega_\mathrm{max}$, the overhead due to repetition is low: the probability of accepting an incorrect result decreases exponentially with the number of rounds.
The minimum value for $\omega$ depends on the amount of noise in the devices.
The client assumes a maximum expected test round failure rate, $p_\mathrm{max}$, and chooses $\omega > p_\mathrm{max}$ such that the probability of rejecting any result also decreases exponentially with the number of rounds, making the protocol robust to a limited amount of noise.

For universal quantum computation, particular graph states and a discrete set of single-qubit measurements, $\{\hat{B}_\alpha = \cos(\alpha) \mathsf{X} + \sin(\alpha) \mathsf{Y}\}_{\alpha\in\Theta}$, are sufficient~\cite{mantri_universality_2017}, where $\Theta = \{0,\pi/4,\dots,7\pi/4\}$, and $\mathsf{X},\mathsf{Y}$ are Pauli operators.
Graph states are specific multi-qubit states in which vertices represent qubits initialised in $\ket{+} = (\ket{0}+\ket{1})/\sqrt{2}$ and edges represent entanglement created by two-qubit $\mathsf{CZ}$ gates [Fig.~\ref{fig:Intro}], where $\mathsf{CZ} = \ketbra{0}{0}\otimes\mathsf{1} + \ketbra{1}{1}\otimes\mathsf{Z}$.
The qubits are measured in a fixed order, using the basis $\hat{B}_{\alpha_\ell}$ at node $\ell$, where $\alpha_\ell$ depends on the algorithm and on previous measurement outcomes.

\begin{figure*}[htb]
    \centering
    \includegraphics{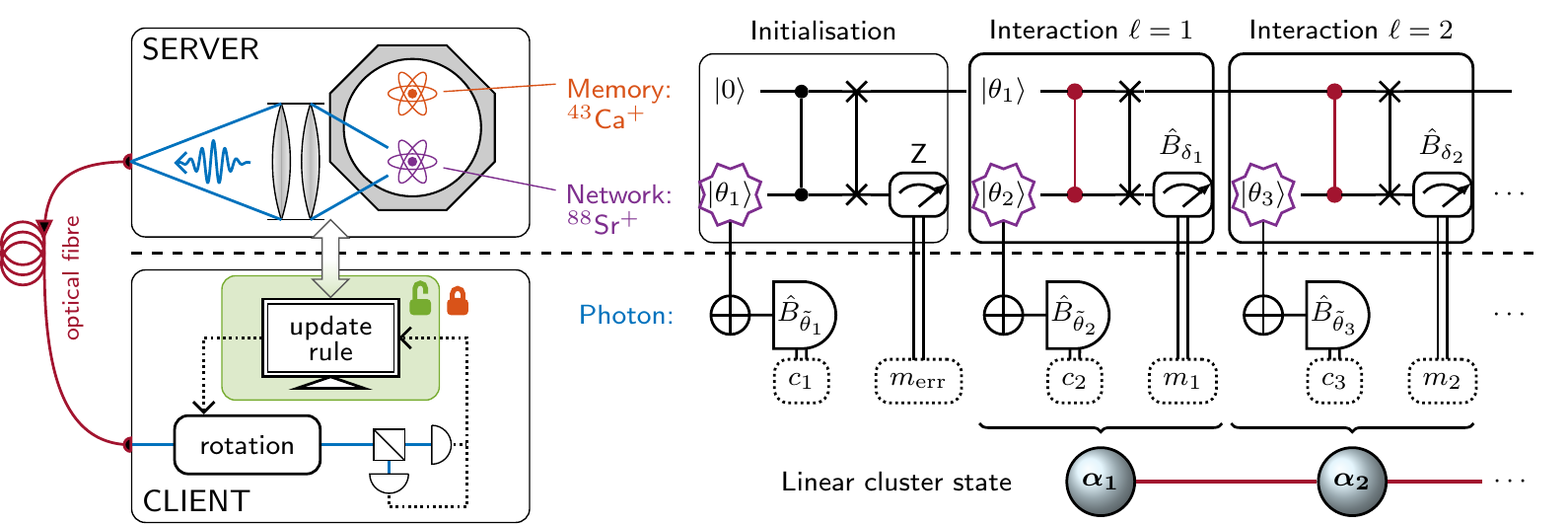}
    \caption{
        Protocol used to generate a linear cluster state using a trapped-ion quantum server and a photonic client.
        The client can steer the network qubit into $\ket{\theta_\ell} = \ket{\tilde{\theta}_{\ell}+c_{\ell} \pi}$ by measuring the polarisation of the photon in the basis $\hat{B}_{\tilde{\theta}_{\ell}}$ and obtaining $c_\ell \in \{0,1\}$ as outcome.
        In the initialisation step, the server transfers this state onto a memory qubit such that the network qubit can be steered again~\cite{drmota_robust_2023}.
        Every subsequent interaction step extends the size of the cluster state;
        the client steers the network qubit remotely into $\ket{\theta_{\ell+1}}$, the server entangles it ($\mathsf{CZ}$ gates), and performs a measurement in the basis $\hat{B}_{\delta_\ell}$, where $\delta_\ell$ is provided by the client.
        See text for details.
    }
    \label{fig:Protocol}
\end{figure*}
% More concrete instructions
To blindly run the above protocol with measurement angles $\alpha_\ell$, the client performs \ac{RSP} into $\ket{\theta_\ell} = \exp(-\mathrm{i} \frac{\theta_\ell}{2} \mathsf{Z}) \ket{+}$, with secret phase shift $\theta_\ell \in \Theta$ for every qubit $\ell=1,2,\dots,q$, and shifts the measurement angles accordingly.
This way, $\theta_{\ell}$ act as a classical encryption key such that $\alpha_\ell$ remain private to the client.
To ensure that the corresponding measurement outcomes, $m_\ell \in \{0,1\}$, are uninformative, the client hides bit flips in half of the measurement angles that are indicated by secret key bits, $r_\ell \in \{0,1\}$ [Eq.~\eqref{eq:Encryption}].
The client can recover the unencrypted measurement outcomes as $m_\ell \oplus r_\ell$.

% Limiting to linear cluster applications.
Here we implement \ac{BQC} on linear cluster states [Fig.~\ref{fig:Protocol}].
Two physical qubits are sufficient to implement linear clusters of arbitrary length, as qubits can be reinitialised after every mid-circuit measurement.
The first qubit -- the network qubit -- can be steered into an arbitrary state by the client using \ac{RSP}~\cite{bennett_remote_2001}, while the second qubit -- the memory qubit -- carries the information encoded in the leading node of the expanding linear cluster state.
We break the cluster state into discrete interaction steps between the server and the client, starting with the initialisation step [Fig.~\ref{fig:Protocol}], which prepares the memory qubit in $\ket{\theta_1}$.
At each interaction of a computation round, the client performs \ac{RSP} to steer the network qubit into $\ket{\theta_{\ell+1}}$ and communicates
\begin{align}
    \delta_\ell=(-1)^{R_{\ell - 1}} \alpha_\ell + \theta_\ell + \pi r_\ell
    \label{eq:Encryption}
\end{align}
to the server, where $R_\ell = \bigoplus_{1 \leq j < \ell/2} (m_{\ell - 2j} \oplus r_{\ell - 2j})$ is the adaptive feedforward correction from decrypted previous measurements.
After applying the $\mathsf{CZ}$ gate and a $\mathsf{SWAP}$ gate, the server measures the network qubit in the $\hat{B}_{\delta_\ell}$ basis and returns the result, $m_\ell$, to the client [interaction blocks in Fig.~\ref{fig:Protocol}].
This process leaves the leading cluster state node on the memory qubit, encrypted by $R_{\ell}$~\cite{supplement}, while the network qubit is available for further \ac{RSP}.
\nocite{tomescu_qubit_2019, ziegler_optimum_1942, simon_hamiltons_2012, rehacek_diluted_2007}

The client randomly assigns each round a secret label identifying them as a computation or a test.
In test rounds, the client prepares every second qubit in a $\mathsf{Z}$ eigenstate, $\ket{r_\ell}$, which are called \enquote{dummy qubits}.
This step leaves the remaining, so-called \enquote{trap qubits}, in a separable state.
The outcome $m_\ell \eqsym{!} r_\ell$ of measuring these trap qubits with $\delta_\ell = \theta_\ell + \pi r_\ell$ can thus be predicted efficiently by the client.

%\section*{Implementation}
\paragraph{Server.}
The server controls an ion trap quantum processor containing one \Sr{} and one \Ca{} ion.
Ion-photon entanglement needed for \ac{RSP} is generated by fast excitation and spontaneous decay \cite{blinov_observation_2004} on the \SI{422}{\nano\meter} transition of \Sr{}.
The single photons are collected by free-space optics and coupled into a single-mode optical fibre~\cite{stephenson_high-rate_2020}, which forms the quantum link with the client.
The memory qubit is encoded in \Ca{}, which provides a long coherence time (\ish\SI{10}{\second}) and is unaffected by concurrent manipulation of \Sr{}~\cite{drmota_robust_2023}.
Thus, \Sr{} can be used for mid-circuit measurements and sympathetic cooling between interaction steps.
The $\mathsf{CZ}$ gate required to build the cluster state is combined with the $\mathsf{SWAP}$ gate into an $\mathsf{iSWAP}$ gate.
This enables reuse of \Sr{} for \ac{RSP} whilst the current state of the computation is retained on the memory qubit.
Errors during the initialisation step are detected in real time [$m_\mathrm{err} = 1$ in Fig.~\ref{fig:Protocol}] in which case this step is repeated.
\begin{figure}[tb]
    \centering
    \includegraphics{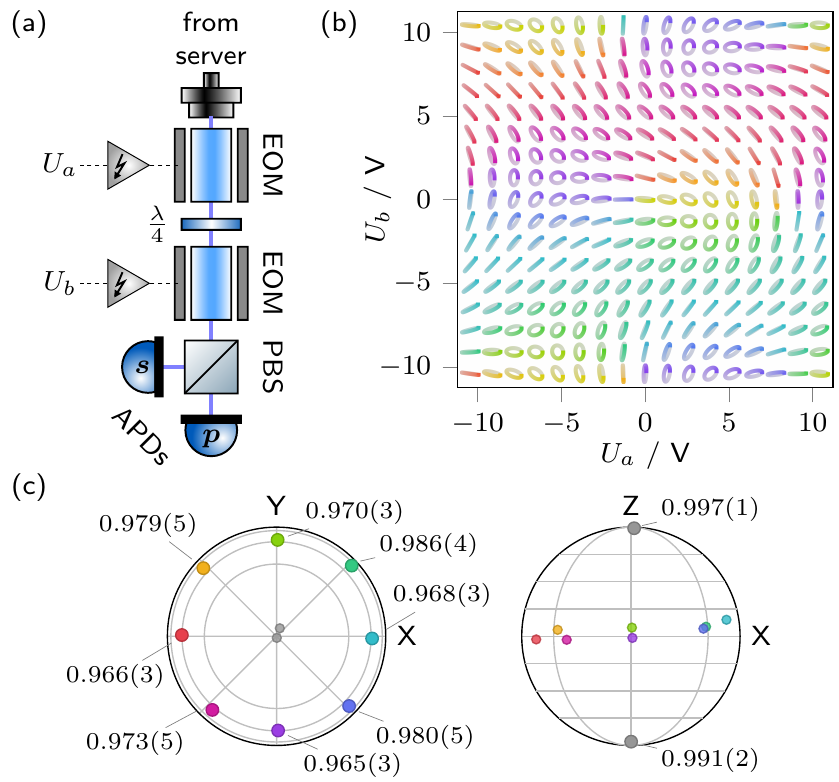}
    \caption{The client performs \acf{RSP} using a fast-switching polarisation analyser.
        (a) The control voltages ($U_a$, $U_b$) of two \acp{EOM} separated by a $\lambda/4$ waveplate enable the client to arbitrarily rotate the measurement basis given by the \acs{PBS}.
        (b) Laser light is used to reconstruct this basis for different $U_a, U_b$.
        Polarisation ellipses are shown for the basis states heralded by detector $p$, where the colour represents their phase.
        (c) To find $U_a, U_b$ which maximise the fidelity $F$ to each target state needed during the protocol, we perform tomography on the network qubit after \ac{RSP}.
        The averaged results from 36 calibrations over 2 weeks are shown in the Bloch sphere representation of the network qubit.
        Values indicate $F$, with standard deviations obtained from bootstrapping.
    } \label{fig:PolarisationAnalyser}
\end{figure}

\paragraph{Client.}
The client receives single photons from the server through an optical fibre.
The quantum capability of the client is reduced to projective polarisation measurements of these photons in a basis that can be dynamically reconfigured by changing the voltages on two \acfp{EOM}~\cite{supplement} [Fig.~\ref{fig:PolarisationAnalyser}].
This measurement remotely steers the network qubit into a state that depends only on the polarisation measurement basis and the measurement outcome obtained, information known exclusively to the client [$\tilde{\theta}_\ell$ and $c_\ell$ in Fig.~\ref{fig:Protocol}].
Birefringence in the optical fibre transforms the photonic state before reaching the client by an unknown unitary operation, which drifts on a timescale of \ish\SI{10}{\minute} due to thermal effects.
To compensate for this drift, the client periodically recalibrates the \ac{EOM} voltages~\cite{supplement} [Fig.~\ref{fig:PolarisationAnalyser}(c)].

\paragraph{Blindness.}
We consider information that could leak to an adversarial server, concerning the client's polarisation measurement, via the network qubit, which is controlled by the server, and through classical signals, which are controlled by the client.
We quantify the information that the server could gain from measuring the network qubit at $0.031(4)$ bits per interaction step using quantum state tomography, and find good agreement with independent estimates~\cite{supplement}.
In our demonstration, mismatched electronic delays between different polarisation measurement outcomes are the dominant cause for information leakage.
However, as the client controls the relevant classical signals, these delays could be matched.
The remaining leakage of \ish\num{0.001} bits per interaction step would be dominated by imperfections in the polarising optics used by the client.

\begin{figure}[htb]
    \centering
    \includegraphics{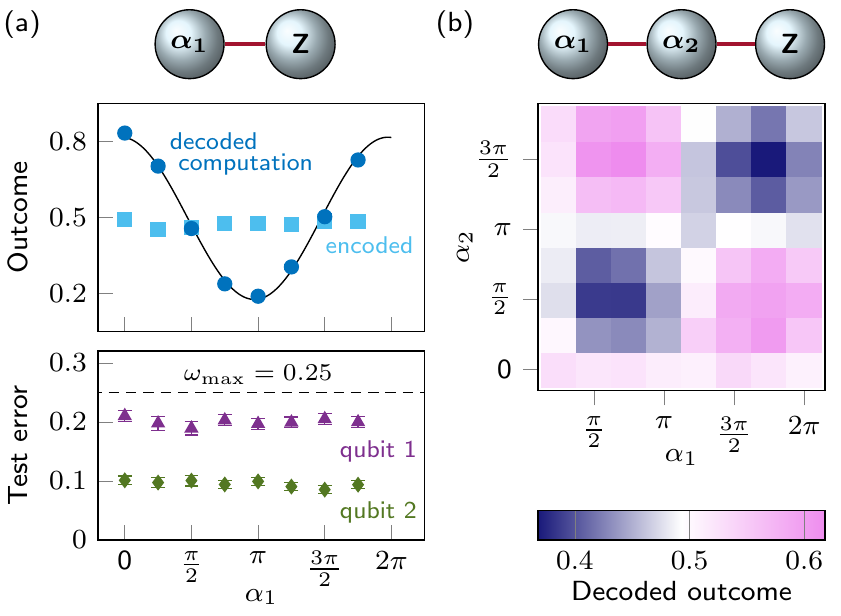}
    \caption{
        Experimental results on an expanding linear cluster state, where the leading qubit is measured in the $\mathsf{Z}$ basis after (a) one, and (b) two interaction steps between the client and the server.
        (a) While the server observes mixed outcomes (squares, \ish\num{2000} test and computation rounds each), for each $\alpha_1$, the client can decode the results using the secret keys.
        A fit to the decoded computation outcomes (circles) is shown to guide the eye.
        Error intervals indicate the binomial standard error.
        The test round errors are significantly below the threshold for verification of a two-node cluster state (dashed line).
        (b) The decoded outcome is shown for different blind measurement settings, ($\alpha_1, \alpha_2$), each comprising \ish\num{3100} computation rounds (see Supplementary Material~\cite{supplement} for interleaved test round results).
    }
    \label{fig:Results}
\end{figure}
\paragraph{Results.}
We realise different quantum computations with one and two interaction steps, see Figs.~\ref{fig:Results}(a) and \ref{fig:Results}(b) respectively.
We could use the output qubit in further interaction steps, or make a final measurement in the basis $\hat{B}_{\delta_{q+1}}$ to complete the $(q+1)$-node cluster computation.
In this demonstration, however, the output qubit is always measured in the $\mathsf{Z}$ basis.
Since this measurement commutes with the $\mathsf{CZ}$ gate preceding it, the computation is equivalent to a cluster state with one fewer node.
The one- and two-step interactions therefore implement the computations $\mathsf{H} Z(\alpha_1) \ket{+}$ and $X(\alpha_2) Z(\alpha_1) \ket{+}$, respectively, where $\mathsf{H}$ is the Hadamard gate, $X(\alpha) = \exp(-\mathrm{i} \frac{\alpha}{2} \mathsf{X})$ and $Z(\alpha)=\exp(-\mathrm{i} \frac{\alpha}{2} \mathsf{Z})$ are single-qubit rotations, and $\alpha_1$ and $\alpha_2$ are encrypted using Eq.~\eqref{eq:Encryption} during the protocol.
From the server's perspective, the outcomes appear random [squares in Fig.~\ref{fig:Results}(a)] as a result of the bit-flip encryption, $\delta_\ell \propto r_\ell \pi$, which is applied by the client in both the computation and test rounds.
The client on the other hand can use the round type (computation or test) and encryption key ($r_\ell$) to decode the outcomes.
The decoded computation outcomes, indicated by the circles in Fig.~\ref{fig:Results}(a) and the colourmap in Fig.~\ref{fig:Results}(b), match the expected fringe pattern as a function of the blind measurement angles $\alpha_1$ and $\alpha_2$.
Experimental imperfections lead to a reduction in contrast and to phase shifts.
The client observes an error rate of $p_\mathrm{fail}^{(1)} = 0.201(3)$ on the first qubit and $p_\mathrm{fail}^{(2)} = 0.095(2)$ on the second qubit [bottom panel in Fig.~\ref{fig:Results}(a)], which are consistent with known error sources~\cite{supplement}.
By changing the final measurement basis from $\mathsf{Z}$ to $\hat{B}_{\delta_{q+1}}$ with an additional $\pi/2$ pulse, which would have no significant impact on the error budget, and randomly choosing one qubit as trap qubit in every test round, we find that a two-node cluster computation could be verified using our apparatus~\cite{supplement};
the expected average test round failure probability of \ish\num{0.18} would be significantly below $\omega_\mathrm{max} = 0.25$ required for secure and robust verification of this state.
The corresponding test round results for the three-node cluster computation are shown in the Supplementary Material~\cite{supplement};
the observed failure rates indicate that verification is not possible in this case, largely due to technical limitations (motional heating) on the $\approx 0.91$ fidelity of the $\mathsf{iSWAP}$ gate~\cite{drmota_robust_2023}.

\paragraph{Conclusion.}
We have implemented a protocol for blindly delegating quantum computations to a trapped-ion quantum processor, using a client apparatus that requires only single-photon polarisation measurements and classical communication.
We have established bounds on information leakage through both the classical and quantum channels that are present in our implementation.
We have shown that the size of the cluster state can be increased without increasing the number of physical qubits in the server and without modifications to the client hardware.
If more memory qubits were added to the server~\cite{wright_benchmarking_2019, keller_controlling_2019}, the computational space could be extended to higher-dimensional cluster states.
We have taken steps to include verification into the protocol, and the measured test round error indicates that computations on two-node cluster states could be verified robustly and reliably.
We predict that for a \ac{BQP} decision problem with small inherent algorithmic error and $p_\mathrm{max}=0.185$, the probability of accepting an incorrect result and that of rejecting any result would both be $\num{e-5}$ after \num{24000} repetitions, including \num{14400} test rounds; every additional \num{1200} repetitions would halve this likelihood~\cite{supplement}.
This approach is expected to provide both security and robustness for larger cluster states and other algorithms as long as the errors remain below the size-dependent threshold, $\omega_\mathrm{max} \approx 1 - (3/4)^{2/q}$, where $q$ is the total number of qubits in the cluster state.
The protocol that we have implemented does not incorporate error correction; to verify larger cluster states, the error per interaction step would need to be reduced.
The infidelity of the $\mathsf{iSWAP}$ gate is the leading error source~\cite{drmota_robust_2023}, but we note that in other systems, $\mathsf{CZ}$ gates between \Sr{} and \Ca{} with fidelity $0.998$ have been demonstrated~\cite{hughes_benchmarking_2020}.
The state-of-the-art ion-photon entanglement fidelity of $0.979(1)$ (this apparatus) is limited primarily by technical imperfections in the optical setup (alignment).

In comparison with previous experimental implementations~\cite{barz_demonstration_2012, barz_experimental_2013, fisher_quantum_2014, greganti_demonstration_2016}, which were based on purely photonic platforms without quantum memory, this work overcomes several major challenges associated with real-world \ac{BQC} deployments.
As quantum logic operations in the server are deterministic and the interaction with the client is heralded, our implementation eliminates the need for post-selection, avoiding the associated efficiency, scalability, and security issues~\cite{barz_demonstration_2012, barz_experimental_2013, greganti_demonstration_2016}.
Here, photon losses in particular do not present a security threat, and the use of a memory qubit combined with fast and adaptive hardware facilitates true shot-by-shot randomisation of all protocol parameters in real time.

Future realisations could involve a complex network of servers and clients.
Photons could be routed to a number of clients using optical switches, and the distance to the server increased using frequency conversion of the photons to telecommunication wavelengths~\cite{krutyanskiy_light-matter_2019} or using recent developments in fibre technology~\cite{numkam_loss_2023}.
The photonically-interfaced trapped-ion quantum information platform demonstrated here paves the way for secure delegation of confidential quantum computations from a client with minimal quantum resources to a fully capable, but untrusted, quantum server.\\

% Acknowledgements.
We thank Sandia National Laboratories for supplying the HOA2 ion trap used in this experiment, and the developers of the experimental control system ARTIQ~\cite{ARTIQ}.
DPN acknowledges support from Merton College, Oxford.
DL acknowledges support from the ANR project SecNISQ.
AM and DM acknowledge support from the U.S.\ Army Research Office (refs.\ W911NF-20-1-0015 and W911NF-18-1-0340).
GA consults for Nu Quantum Ltd and acknowledges support from Wolfson College, Oxford.
RS is partially employed by Oxford Ionics Ltd and acknowledges funding from an EPSRC Fellowship EP/W028026/1 and Balliol College, Oxford.
CJB is a director of Oxford Ionics and acknowledges support from a UKRI FL Fellowship.
EK acknowledges support from grant ref.\ EP/X026167/1.
We acknowledge technical and financial support from VeriQloud (of which EK is a co-founder) during the initial design of this project, via an industry partnership grant from the NQIT Quantum Technology Hub EP/M013243/1.
This work was supported by the U.K.\ EPSRC \enquote{Quantum Computing and Simulation} Hub EP/T001062/1, and the E.U.\ Quantum Technology Flagship Project AQTION (No.\ 820495).

\onecolumngrid
\clearpage
\begin{center}
    \textbf{\large Supplemental Material for `\Title{}'}
\end{center}
\twocolumngrid

\renewcommand{\thepage}{S\arabic{page}}
\renewcommand{\thesection}{S\arabic{section}}
\renewcommand{\thetable}{S\arabic{table}}
\renewcommand{\thefigure}{S\arabic{figure}}

\section{Data handling}
We make extensive efforts to conduct our experiments under conditions that are representative of a real deployment.
The client and the server are controlled by independent personal computers and hardware from the ARTIQ open-source control system~\cite{ARTIQ}.
The experiment and calibration schedule is coordinated over Ethernet.
For timing-critical communication, such as the interaction during the protocol, low-latency electronic signals are used.
Throughout the data acquisition and analysis process, the client does not reveal any protocol secrets to the server.

\section{Linear cluster state}
In computation rounds, the memory qubit state after $q$ interactions is given by
\begin{gather*}
    \ket{\psi_{q+1}} = \hat{Z}(\theta_{q+1}) \mathsf{X}^{R_{q}} \mathsf{Z}^{R_{q-1}} \left(\mathsf{H}\hat{Z}(\alpha_{q}) \cdots \mathsf{H}\hat{Z}(\alpha_1)\right) \ket{+}, \\
    \hat{Z}(\alpha) := \exp\left( - \mathrm{i} \frac{\alpha}{2} \mathsf{Z} \right), % XXX: Check convention.
\end{gather*}
where $\mathsf{X}$ and $\mathsf{Z}$ are Pauli operators and $\mathsf{H}$ is the Hadamard gate.

\section{Sequence timings}
The time taken to process one node of a cluster state includes an average \SI{100}{\micro\second} until successful detection of a single photon at the client for \ac{RSP} (limited by photon loss in the quantum link), \ish\SI{400}{\micro\second} for transfer between the logic and the memory qubit in \Ca{}~\cite{drmota_robust_2023}, \ish\SI{340}{\micro\second} for the $\mathsf{iSWAP}$ gate between \Sr{} and \Ca{}, \ish\SI{135}{\micro\second} for readout of \Sr{}, \SI{50}{\micro\second} for deshelving of \Sr{}, \ish\SI{230}{\micro\second} for sympathetic ground state cooling using \Sr{}, and \ish\SI{150}{\micro\second} for the communication of the measurement outcome from the server to the client.

The duration of each photon generation attempt is \SI{1}{\micro\second}, which includes server-side fast state preparation of \Sr{} (\ish\SI{350}{\nano\second} laser switching latency $+$ \SI{300}{\nano\second} pulse duration $+$ \SI{50}{\nano\second} delay), server-side pulsed excitation within a \SI{12.5}{\nano\second} window, client-side photon detection windows (\ish\SI{30}{\nano\second}), and communication of the outcome, i.e.\ whether a photon was received (\ish\SI{64}{\nano\second}).
The client and the server continue attempts in a loop until either a photon is detected at the client or a timeout (\SI{1}{\milli\second}) is reached.
In the case of a timeout (probability $<\num{e-4}$), a series of system checks (ion loss, laser lock status) is performed before continuing the protocol.

\section{Client apparatus}
In order to meet the timing requirements for rapid manipulation of the photon polarisation, \acp{EOM} (Thorlabs, EO-AM-NR-C4) were selected for their fast switching speeds.
For achieving universality, two electro-optic modulators in series provide the necessary degrees of freedom to be able to realise any polarisation measurement basis.

\begin{figure}[htb]
    \centering
    \includegraphics{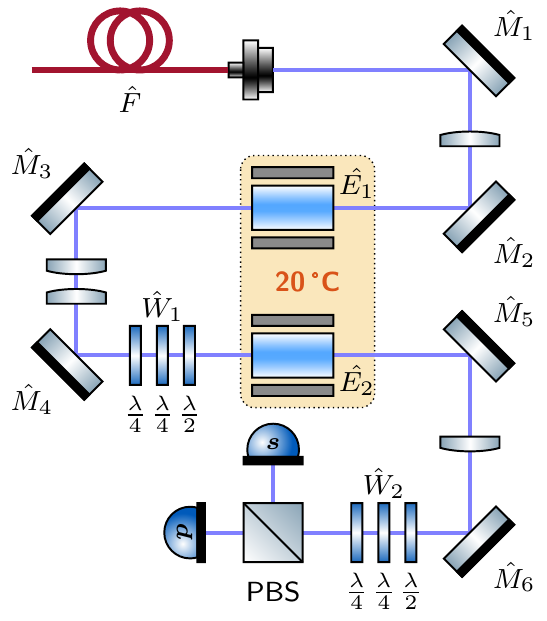}
    \caption{Detailed beam path for the fast photon polarisation analyser used by the client.
        The \acp{EOM} are thermally shielded and actively stabilised to \SI{20}{\degreeCelsius} with \SI{0.1}{\milli\kelvin} stability.
        The waveplate triplets $\hat{W}_1$ and $\hat{W}_2$ are used to cancel unwanted birefringence in the sections comprising $\{\hat{M}_3, \hat{M}_4\}$ and $\{\hat{E_2}, \hat{M}_5, \hat{M}_6\}$, respectively.
        The birefringence in components $\{\hat{F}, \hat{M}_1, \hat{M}_2, \hat{E_1}\}$ is absorbed into the calibration of the device.
    }
    \label{fig:client apparatus}
\end{figure}

\subsection{Preliminaries}

\subsubsection{Temperature stability of EOMs}
Even though the \acp{EOM} are manufactured in a dual-crystal configuration which provides passive cancellation of the differential temperature dependence between the ordinary and extraordinary axes, significant temperature-dependent polarisation drifts were observed in a preliminary investigation~\cite{tomescu_qubit_2019}.
Therefore, the final optical layout of the client apparatus is designed with shared temperature stabilisation of the \acp{EOM} [Fig.~\ref{fig:client apparatus}].
A two-layer enclosures with added thermal insulation surrounds the \ac{EOM} modules.
In addition, active stabilisation is employed using a Peltier element on the top surface of the mount, with a heatsink and a fan for ducted heat removal.
With the feedback gains calibrated using a variant of the Ziegler-Nichols method~\cite{ziegler_optimum_1942}, the temperature settles within \ish\SI{2}{\minute} and reaches a stability below \SI{0.1}{\milli\kelvin} under typical operating conditions.

\subsubsection{Optical impurities in EOMs}
The incoming light is focussed into the first \ac{EOM}, then recollimated and focussed into the second \ac{EOM}, using \SI{300}{\milli\meter} plano-convex lenses.
Focussing the light through the crystals significantly reduces depolarising effects due to spatial inhomogeneities in the \ac{EOM} crystals.

\subsubsection{Polarising beam splitter}
Imperfections of the \ac{PBS} reduce the distinguishability of orthogonal polarisation states.
We measure the extinction ratio of the \ac{PBS} for pure $s$ and $p$ polarisation, and obtain $T_s/T_p=\num{0.5e-4}$ and $R_p/R_s=\num{1.3e-4}$ in transmitted and reflected power, respectively.
We note that this imperfection has no effect on the blindness of the implementation; it merely reduces the \ac{RSP} fidelity.

\subsubsection{Internal birefringence cancellation}
The action of an ideal \ac{EOM} is to rotate the polarisation around a fixed axis represented by the unitary transformation
\begin{align*}
    \hat{R}(U) = \exp\left[-\mathrm{i} \frac{\phi(U)}{2} \hat{X} \right] \ ,
\end{align*}
where the rotation angle $\phi$ is a function of the voltage $U$ applied across the crystal and $\hat{X}=\ketbra{H}{V}+\ketbra{V}{H}$ in the basis given by the extraordinary axis of the \ac{EOM}.
For a pair of ideal \acp{EOM} to be able to reach any output polarisation from an arbitrary input, the rotation axes must be made orthogonal to each other.
This can in principle be achieved by placing a quarter waveplate between the \acp{EOM}; in practice, however, there is an unknown amount of static birefringence from each optical element including the \ac{EOM} crystals and mirrors.
To compensate this exactly, the inverse unitary operation has to be constructed with optical elements.
It can be shown that at least a triplet of waveplates, e.g.\ two quarter-wave and one half-wave, are required to implement the most general unitary acting on the polarisation qubit~\cite{simon_hamiltons_2012}.
In Fig.~\ref{fig:client apparatus}, all optical components that could affect the polarisation are labelled with a unitary operator.
The operators $\hat{E_1}$ and $\hat{E_2}$ capture the unknown static birefringence in the two \acp{EOM}.
We adjust the waveplate triplets $\hat{W}_1$ and $\hat{W}_2$ to approximately cancel all unwanted sources of birefringence, such that
\begin{multline*}
    \hat{P}_\mathrm{BS} \underbrace{\hat{W}_2 \hat{M}_6 \hat{M}_5 \hat{E_2}}_{\mathsf{1}} \hat{R}(U_b)
    \underbrace{\hat{W}_1 \hat{M}_4 \hat{M}_3}_{\hat{Q}_\mathrm{WP}} \hat{R}(U_a)
    \underbrace{\hat{E_1} \hat{M}_2 \hat{M}_1 \hat{F} \ket{\psi}}_{\ket{\tilde{\psi}}} \\=
    \hat{P}_\mathrm{BS} \hat{R}(U_b) \hat{Q}_\mathrm{WP} \hat{R}(U_a)\ket{\tilde{\psi}} \ ,
\end{multline*}
where $\hat{P}_\mathrm{BS}=\hat{Z} = \ketbra{H}{H}-\ketbra{V}{V}$ is the projector implemented by the \ac{PBS} and $\hat{Q}_\mathrm{WP}=\exp(\mathrm{i} \frac{\pi}{4} \hat{Z})$ is the unitary of an ideal quarter waveplate aligned with a principal axis.
We do not correct the transformation from the input state $\ket{\psi}$ to $\ket{\tilde{\psi}}$ because this merely rotates the overall coordinate system.

\subsubsection{Switching dynamics}
Significant drift behaviour was observed after changing the electric field across the crystal~\cite{tomescu_qubit_2019}.
This can be attributed to charging of the crystal by the high-voltage electrodes attached to it.
We therefore apply compensation pulses after each pulse with the same duration and amplitude, but opposite sign, in order to zero the time-averaged electric field.
A systematic analysis of the pulse duration, settling time and the duty cycle showed that with compensation pulses in place the detrimental effects, which otherwise dominate, can be fully removed.
We determine the switching speed by recording the intensity of the transmitted fraction over time.
The intensity settles to \SI{1}{\percent} of the final value within \SI{18}{\micro\second}, in synchronisation with the settling of the driving voltage, which we therefore identify as the speed-limiting factor.

\subsection{Precharacterisation}
To characterise the action of the client setup on arbitrary polarisation inputs, we use motorised waveplates following a Glan-Taylor polariser at the input.
For this characterisation, we use \ish\SI{1}{\milli\watt} of continuous-wave laser light at \SI{422}{\nano\meter}.
The fraction of horizontally polarised output power contains information about the state amplitudes, but not their phase.
To gain knowledge of the phase, amplitude information of linearly independent input states must be combined.
We collect data for all combinations of quarter-waveplate (measured retardance $2 \pi \times \SI{0.2584(4)}{\radian}$) angles $q \in\{-\pi/4, 0, \pi/4\}\,\si{\radian}$ and half-waveplate (measured retardance $2 \pi \times \SI{0.5000(1)}{\radian}$) angles $h \in\{-\pi/8, 0, \pi/8, \pi/4\}\,\si{\radian}$ in random order.
For each \ac{EOM} voltage setting, we perform a nonlinear fit to the waveplate scan data using the model
\begin{align}
    t = \left|\bra{\psi} \hat{Q}_\mathrm{WP}(q) \hat{H}_\mathrm{WP}(h) \ket{H}\right|^2
    \label{eq:precharacterisation}
\end{align}
for the transmitted fraction $t$, where $\ket{\psi} = \cos(\vartheta/2) \ket{H} + \sin(\vartheta/2) \exp(\mathrm{i} \varphi) \ket{V}$, the waveplate angles $(q,h)$ are varied in the scan, and $(\vartheta, \varphi)$ are adjusted during the optimisation.
The result of this analysis is shown in Fig.~\ref{fig:PolarisationAnalyser}(b) of the main text.
Using a density matrix formulation of Eq.~\eqref{eq:precharacterisation}, the purity of the reconstructed states is found to be consistent with $\mathrm{Tr}(\rho^2) = 1$ for all \ac{EOM} voltage settings used.

\subsubsection{Orthogonal measurements}
In order to achieve perfect blindness, the measurements implemented by the client apparatus must not leak information to the server.
One possibility for this to happen would be secret-dependent noise, such as imperfections that depend on the \ac{EOM} voltage settings.
Using the same measurement setup as described at the beginning of this section, we record data for an exhaustive list of quarter- and half-waveplate angles (overlap between adjacent polarisation states $\approx\num{3e-4}$).
For each \ac{EOM} voltage pair $(U_a,U_b)$ scanned, we select the two input polarisation states $\ket{\psi_{ab}^{+}}$ and $\ket{\psi_{ab}^{-}}$ created by these waveplates that respectively maximise the power transmitted and reflected by the \ac{PBS}.
The overlap $\left|\braket{\psi_{ab}^{-}|\psi_{ab}^{+}}\right|^2$ shown in Fig.~\ref{fig:Orthogonality} averages to $0.0016$ (median value).
\begin{figure}[htbp]
    \centering
    \includegraphics{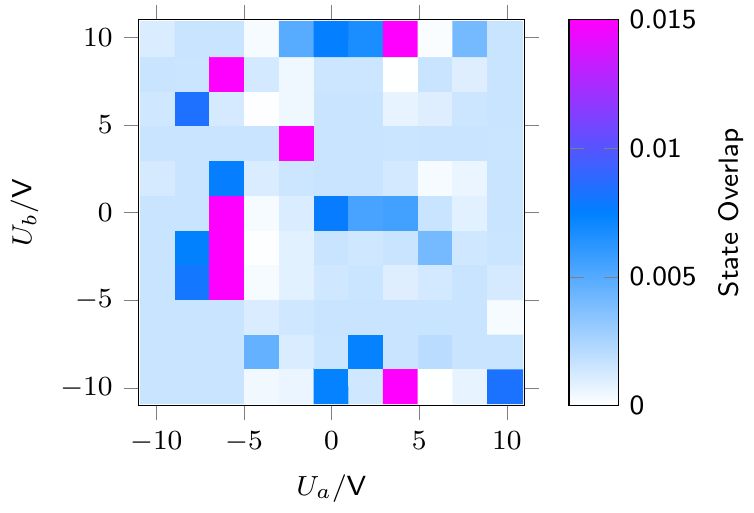}
    \caption{Overlap of polarisation input states which produce orthogonal polarisation measurement outcomes.}
    \label{fig:Orthogonality}
\end{figure}

\subsection{Long-term stability}
\begin{figure}[ht]
    \centering
    \includegraphics{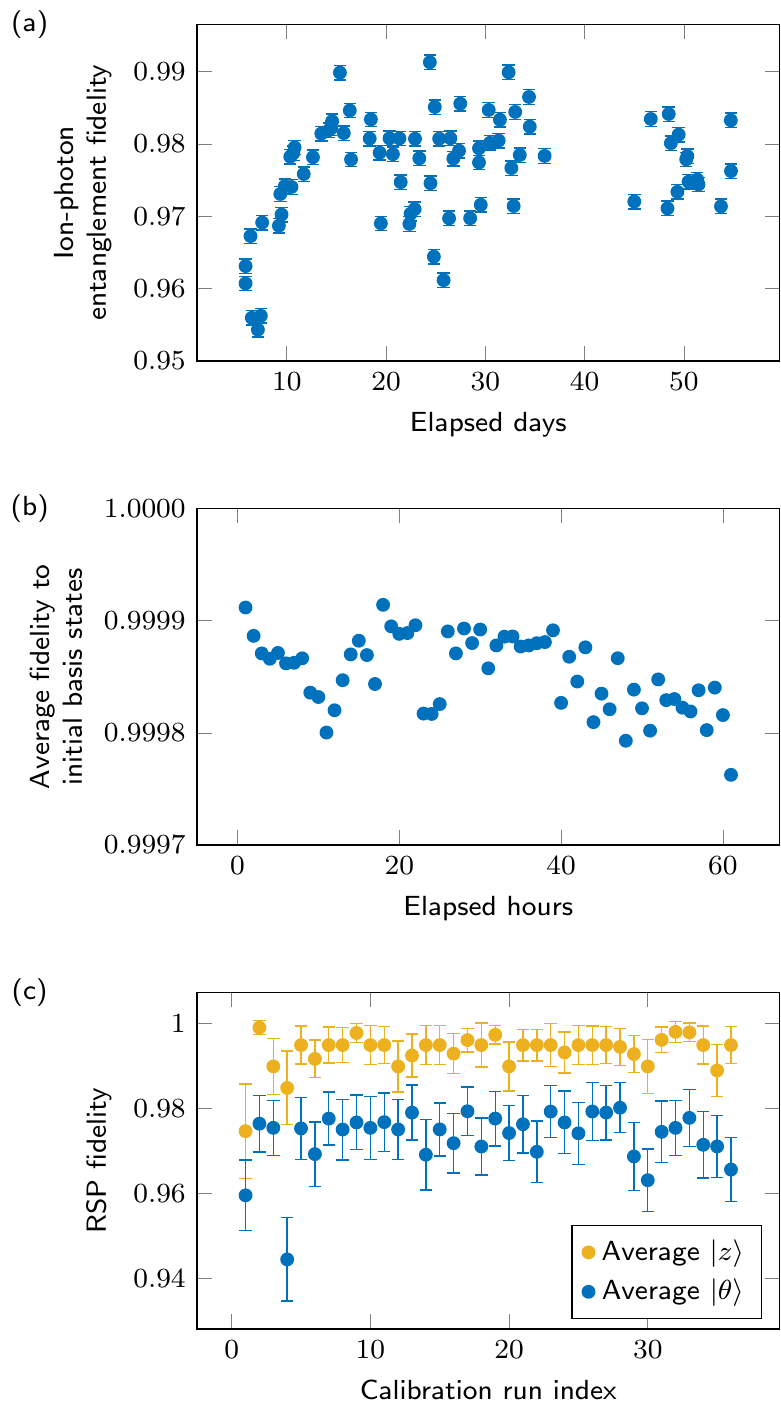}
    \caption{Long-term stability of (a) the photonic quantum networking link (b) the client apparatus and (c) the combined system.
        The measurements (a), (b), and (c) were performed at different times.}
    \label{fig:stability}
\end{figure}
\subsubsection{Ion-photon entanglement}
The performance of the ion-photon interface was monitored continuously using two-qubit tomography, which allows to reconstruct the density matrix that describes the joint state of the network qubit and the polarisation qubit [Fig.~\ref{fig:stability}(a)], as described in Ref.~\cite{stephenson_high-rate_2020}.

\subsubsection{Polarisation analyser}
The stability of the polarisation analyser was independently examined {\em ex situ} using repeated tomography measurements over the same set of \ac{EOM} control voltages for \SI{62}{\hour} [Fig.~\ref{fig:stability}(b)].
Over this time period, the infidelity due to polarisation drifts was less than \num{3e-4}, limited by the accuracy of the measurement.

\subsubsection{Remote state preparation fidelity}

We reconstruct the \enquote{steered} state of the network qubit using maximum likelihood tomography~\cite{rehacek_diluted_2007} for both polarisation heralds and average the fidelity to each of the target states needed for the verifiable blind quantum computing protocol.
The average fidelity of steering superposition states $\ket{\theta} = (\ket{0} + \exp(\mathrm{i}\theta)\ket{1})/\sqrt{2}$, where $\theta \in \{0, \frac{\pi}{4}, \frac{2\pi}{4}, \dots, \frac{7\pi}{4}\}$, and of steering $\mathsf{Z}$ basis eigenstates, $\ket{z}$, where $z \in \{0, 1\}$, is $\mathcal{F}_\theta = 0.973(7)$ and $\mathcal{F}_z = 0.996(3)$, respectively [Fig.~\ref{fig:stability}(c)].

\subsection{Remote state preparation calibration}
As the fibre connecting the server and the client is naturally exposed to changes in temperature and strain, its birefringence needs to be calibrated periodically.
In order to do so, the client instructs the server to perform $\mathsf{X}$-, $\mathsf{Y}$- and $\mathsf{Z}$-basis measurements for \ac{EOM} voltages on a regular $21\times 21$ grid, in random overall order.
Ion readout results are inverted for heralds in \ac{APD} $s$ and combined with results for \ac{APD} $p$.
Let $\Sigma_{\ket{y}}^{\mathsf{B}}$ denote the number of readout observations with outcome $y \in \{s,p\}$ when measured in the $\mathsf{B} \in \{\mathsf{X}, \mathsf{Y}, \mathsf{Z}\}$ basis.
Direct inversion tomography is used to reconstruct the Bloch vector $\vec{b} = (b_\mathsf{X}, b_\mathsf{Y}, b_\mathsf{Z})$ with components
\begin{align*}
    b_\mathsf{B} =
    \frac{\Sigma_{\ket{1}}^{\mathsf{B}} - \Sigma_{\ket{0}}^{\mathsf{B}}}
    {\Sigma_{\ket{1}}^{\mathsf{B}} + \Sigma_{\ket{0}}^{\mathsf{B}}} \ .
\end{align*}
The client then determines the \ac{EOM} voltage settings which minimise the angular deviation of $\vec{b}$ from the axis on the Bloch sphere corresponding to each of the 5 target bases required during the verifiable blind quantum computing protocol.

Combining the readout results from both \acp{APD} $s$ and $p$ leads to a reduction in \ac{RSP} fidelity if the states created by different heralds are not exactly orthogonal, for example due to a systematic delay mismatch in the heralding signal chains [Fig.~\ref{fig:blindness}(a)].
At the time the experiments were performed, this resulted in \ish\SI{5}{\percent} infidelity in each qubit in the cluster state.
This mismatch can be eliminated by matching the delays of the heralding signals, or by analysing the data separately for the two different heralds, such as for the fidelities shown in Fig.~\ref{fig:stability}(c).

\section{Trap failure probability for 2~interaction steps}
The trap qubit error rate is affected by remote steering into $\ket{\theta}$ (fidelity $\mathcal{F}_\theta' \approx 0.924$ lower than $\mathcal{F}_\theta$ due to unaccounted timing mismatch between \acp{APD} [Fig.~\ref{fig:blindness}(a)] when the data presented in the main text was gathered), the error-detected $\mathsf{iSWAP}$ gate (fidelity $\mathcal{F}_{\mathsf{iS}}' \approx 0.973$), mapping of the superposition between the $\{\ket{\hfslevshort{4}{4}}, \ket{\hfslevshort{3}{3}}\}$ logic qubit and the magnetic field-insensitive $\{\ket{\hfslevshort{4}{0}},\ket{\hfslevshort{3}{0}}\}$ qubit within the \Ca{} ground state hyperfine structure (fidelity $\mathcal{F}_\mathrm{map} \approx 0.98$), and the $\mathsf{iSWAP}$ without error detection (fidelity $\mathcal{F}_{\mathsf{iS}} \approx 0.913$).
We neglect state preparation, measurement, and single-qubit rotation errors (fidelities \gtish\num{0.99}).
The trap qubit failure rate is therefore expected to be
\begin{align*}
    p_\mathrm{fail}^{\mathrm{trap}_1} = 1 - \mathcal{F}_\mathrm{map} \mathcal{F}_{\mathsf{iS}} \mathcal{F}_\mathrm{map} \mathcal{F}_{\mathsf{iS}}' \mathcal{F}_\theta' \approx 0.21 \ .
\end{align*}
The dummy qubit error rate is affected by remote steering into $\ket{z}$ (fidelity $\mathcal{F}_z \approx 0.996$) and the final $\mathsf{iSWAP}$ gate:
\begin{align*}
    p_\mathrm{fail}^\mathrm{dummy} = 1 - \mathcal{F}_{\mathsf{iS}} \mathcal{F}_z \approx 0.09 \ .
\end{align*}
The measured values $p_\mathrm{fail}^{(1)}$ and $p_\mathrm{fail}^{(2)}$ that are stated in the main text agree well with these estimates.
We note that the last qubit in a cluster state must not be swapped back from the memory qubit to the network qubit; the number of $\mathsf{iSWAP}$ gates impacting this qubit is therefore reduced by one.
For the two-qubit cluster state, which could be implemented using the same sequence by changing the final measurement basis from $\mathsf{Z}$ to $\hat{B}_{\theta_{q+1}}$, we need to consider a second trap qubit placement: with the first qubit the dummy qubit and the second qubit the trap qubit, we expect
\begin{align*}
    p_\mathrm{fail}^{\mathrm{trap}_2} = 1 - \mathcal{F}_{\mathsf{iS}} \mathcal{F}_\theta' \approx 0.16 \ ,
\end{align*}
and the expected average test round failure rate is $p_\mathrm{fail} = \frac12 (p_\mathrm{fail}^{\mathrm{trap}_1} + p_\mathrm{fail}^{\mathrm{trap}_2}) \approx 0.18$.
If the \ac{APD} timing mismatch is taken into account in the \ac{EOM} voltage calibration (using $\mathcal{F}_\theta$ instead of $\mathcal{F}_\theta'$), the expected average test round failure rate reduces $\approx 0.14$.

\section{Test round results for 3 interaction steps}
\begin{figure}[htb]
    \centering
    \includegraphics{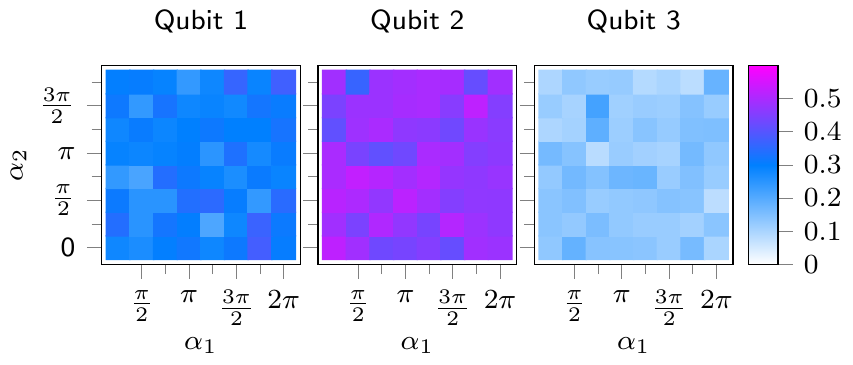}
    \caption{Observed trap failure rates on each of the qubits in the three-node cluster state (with qubit 3 measured in the $\mathsf{Z}$ basis).}
    \label{fig:rotation traps}
\end{figure}
We plot the error observed on each of the qubits in the three-node cluster state in Fig.~\ref{fig:rotation traps}.
In this implementation, the first and the last qubit in the linear cluster state exhibit lower error rates than the qubits in between.
The error in the first qubit is smaller due to the $\mathsf{iSWAP}$ error detection in the initialisation round.
The last qubit is not transferred back from the memory qubit to the network qubit for the final measurement, which reduces the number of $\mathsf{iSWAP}$ gates impacting this qubit by one.

\section{Error detection}
Errors during the initialisation step are detected in real time (error probability \ish\num{0.1}) in which case the current round is aborted and restarted after a period of cooling~\cite{drmota_robust_2023}.
The overhead due to repetition of this step is independent of the size of the cluster state and hence does not affect the scalability of the implemented protocol.
Here, the client can keep the same random variable $\tilde{\theta}_1$ because the phase shift $a_1 \pi$ depends on the photon polarisation measurement outcome, which is inherently random.
The server therefore obtains a maximally mixed state on the network qubit.
Information leaking through classical communication as a result of reusing $\tilde{\theta}_1$ could also be eliminated, as the client is in control of all relevant signals.
The error detection probability is closely related to the $\mathsf{iSWAP}$ infidelity; by reducing this infidelity, error detection would become obsolete.

\section{Photon loss}
The average time taken to obtain a single-photon herald is $\approx\SI{100}{\micro\second}$; hence the probability for no herald to occur within the timeout period of \SI{1}{\milli\second} is $<\num{e-4}$.
If no photon is received within this period, the client instructs the server to repeat the current round starting from the initialisation step after a series of system checks.
The timeout allows the server to recover from rare events that could temporarily hinder the generation of single photons, such as ion loss, misalignments in the photon collection optics, or laser failures.
The information leakage due to repetition in these rounds is purely classical and could be eliminated by the client, who is in control of all relevant signals.
Alternatively, the client could introduce fresh randomness at every retry to avoid this issue.
The loss of determinism due to photon loss is influenced by the timeout period and could be completely eliminated if the client continued execution of the protocol despite an unsuccessful steering attempt.
In this case, the server would check for system failures after the client has completed this round of the protocol.
The computational error that is incurred by this approach due to photon loss would reduce exponentially with increasing timeout period.

\section{Blindness}
\label{sec:Blindness}
Here we consider experimental imperfections that could adversely affect the blindness of the protocol.
Blindness is characterised by how much information an adversarial server could learn about the client's photon measurement basis choices $\theta_\ell$ (3 bits of information per shot).
Table~\ref{tab:leakage} summarises the sources of information leakage that are quantified in this section.
\begin{table}[h]
    \centering\footnotesize
    \begin{tabular}{cccll}
        \toprule
        \multirow{2}*{Channel} & \multirow{2}*{Source} & \multirow{2}*{Method} & \multicolumn{2}{c}{Leakage / bits} \\
        \cmidrule{4-5}
        & & & Observed & Optimised \\
        \midrule
        \ldelim\{{3}{*}[classical] & measurement angles & enforced & 0 & 0 \\
        {} & heralding efficiency & inferred & 0.00006 & 0.00006 \\
        {} & heralding delay & inferred & 0.35 & 0.00007 \\
        \midrule
        \ldelim\{{3}{*}[quantum] & measurement basis & inferred & 0.035 & 0.0007 \\
        {} & imbalanced outcomes & inferred & 0.00029 & 0.00026 \\
        {} & \textemdash & measured & 0.031(4) & \textemdash \\
        \bottomrule
    \end{tabular}
    \caption{Sources of information leakage in a single interaction step.
        The optimised values assume matched heralding delay (from excitation of \Sr{} until electronic detection) and balanced polarisation measurement outcomes.
        We use quantum state tomography to quantify the information that the server could gain from measuring the network qubit and find good agreement with independent estimates inferred from known imperfections (measurement basis, imbalanced outcomes).
        The values are to be compared with the amount of information (3 bits) that specifies the steered state, $\ket{\theta_\ell}$.
    }
    \label{tab:leakage}
\end{table}
At every shot of the experiment, indexed by $\ell$, we assume that the server has unrestricted access to
\begin{itemize}
    \item the qubit measurement angle $\delta_\ell$ (classical signal),
    \item the phase reference set by the photon detection time (classical signal),
    \item the number of attempts until a single-photon herald (classical signal),
    \item and the state of the memory and network qubits.
\end{itemize}
Blindness is compromised if the above observables correlate with $\theta_\ell$.

\begin{figure}[htb]
    \centering
    \includegraphics{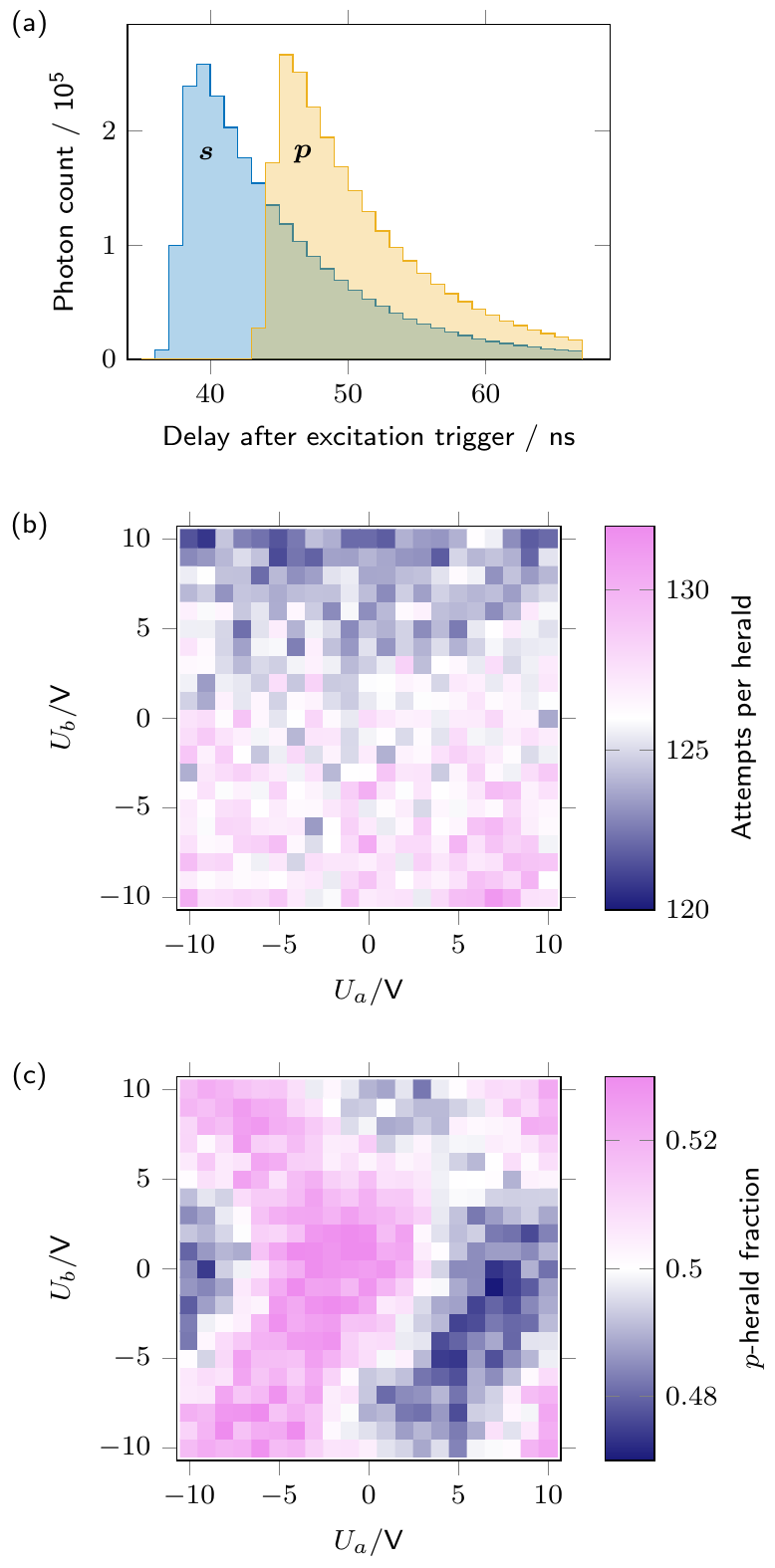}
    \caption{%
        (a) Histogram of delay between pulsed excitation of \Sr{} and a single-photon herald from detectors $s$ and $p$.
        (b) Average number of attempts (\SI{1}{\micro\second} per attempt) until a single-photon herald is announced by the client as a function of \ac{EOM} control voltages.
        (c) Ratio of heralds with outcome $p$ with respect to all heralds as a function of \ac{EOM} control voltages.
    }
    \label{fig:blindness}
\end{figure}

\subsection{Classical information leaks}
\label{sec:ClassicalLeakage}
In this section, we analyse the effect of classical sources of information leakage that are present in our system.
These sources do not include side-channel attacks, which are in general difficult to treat systematically.
We note that, in a real deployment, the client could straightforwardly monitor and eliminate leakage on all classical channels.

\subsubsection{Encrypted measurement angle}
$\delta_\ell$ does not leak information because it is encrypted with private randomness at every shot by the client.
We note that the use of a pseudo-random number generator does not compromise blindness (against computationally-bounded adversaries) provided that its implementation is cryptographically secure.

\subsubsection{Single-photon heralding delays}
The arrival time of photons at the \acp{APD} with respect to the pulsed excitation of \Sr{}, is inherently random due to the nature of spontaneous decay.
As these heralding times set the phase reference for the network qubit, they must be communicated to the server.
Statistical differences between the herald timing distributions could therefore be exploited by the server to learn about the polarisation measurement outcome.

The server could employ a maximum-likelihood strategy to guess the photon measurement result.
The detector properties may be known to the server.
This includes the probability distributions $\mathrm{Pr}_{s}(x)$ and $\mathrm{Pr}_{p}(x)$ for a herald at time $x$ in detector $s$ and detector $p$, respectively.
The server would assume that a photon heralded at time $x$ was observed in detector $s$ if $\mathrm{Pr}_s(x) > \mathrm{Pr}_p(x)$, and vice versa.
This maximum-likelihood strategy succeeds with probability $\sigma(x)$.
The expected information gain is given by $\mathcal{R}_\mathrm{ML} = \sum_x \mathrm{Pr}(x) \mathcal{R}(\sigma(x))$, where $\mathcal{R}(q) = 1 + q \log_2(q) + (1-q) \log_2(1-q)$ is the mutual entropy for a binary guess with success probability $q$, and $\mathrm{Pr}(x)$ is the overall probability of observing the arrival time $x$.

In this demonstration, no measures were taken to equalise the detector responses of the \acp{APD}.
Observed differences [Fig.~\ref{fig:blindness}(a)] are mainly due to unmatched delays in the electronic signal chain, by \SI{6.57(2)}{\nano\second}, leading to $\mathcal{R}_\mathrm{ML} \approx 0.35$ bits information leakage.
This imperfection could be minimised by adding delay to one of the signals (e.g.\ by extending the cable by $\approx\SI{1.3}{\meter}$).
After correcting for the delay mismatch, the information content would be reduced to $\mathcal{R}_\mathrm{ML} \approx \num{7e-5}$ bits, dominated by higher-order electronic distortion, such as differences in timing jitter between the electronic inputs used for timestamping.
Here, a field-programmable gate array is used for timestamping; it only provides \SI{1}{\nano\second} resolution, with inconsistent jitter between inputs.
We note that this is not a fundamental limitation and leakage of this kind could be eliminated by the client by appropriately conditioning the classical heralding signals.

\subsubsection{Single-photon heralding probability}
As the client publicly announces when a photon was successfully heralded, the server learns about the number of attempts taken, hence the detection efficiency, of the client apparatus.
If the detection efficiency depends on the secret measurement settings chosen by the client, the server could, in principle, obtain information that would compromise the blindness of the protocol.
Here we analyse the extent to which this affects our demonstration of blind quantum computing.

We observe variations in the detection efficiency depending on the \ac{EOM} voltage settings of $\approx\pm\SI{5}{\percent}$ [Fig.~\ref{fig:blindness}(b)].
The \ac{EOM} voltages are changed after every photon received by the client.\footnote{
    In \ltish\SI{5}{\percent} of cases in the present implementation, retries (due to errors in the initialisation step, with probability \ish\num{0.1}, and failures to produce a single-photon herald within the time-out interval, with probability $<\num{e-4}$) were performed without the client changing the measurement settings.
    However, this additional information does not change the conclusion of this analysis and the issue could be eliminated by introducing fresh randomness at every retry as well.}
Therefore, the server obtains only one attempt-number sample per measurement setting.
At every shot, the server's \emph{a priori} knowledge is reset, due to this shot-by-shot randomisation, to the exponential distribution with an expectation value $\lambda_0$ estimated from the number of attempts in previous shots.
The Fisher information, $\mathcal{I}(\lambda) = \lambda^{-2}$, quantifies the information each sample contributes towards an updated estimate of the underlying distribution, i.e.\ the exponential distribution for this shot with secret-dependent expectation value $\lambda$.
This quantity needs to be compared to the relative entropy, $\mathcal{S}(\lambda_0, \lambda)$, between the exponential distributions with respective expectation values $\lambda$ and $\lambda_0$.

For the observed average number of attempts, $\lambda_0 = 126$, and the maximum deviation $\Delta\lambda = |\lambda - \lambda_0| = 6$, the information gained per shot, $\mathcal{I}(\lambda) \ltish \num{e-4}$, is negligible.
Even if unlimited attempt-number samples were available for a given measurement setting, the information gain by the server could never exceed $\mathcal{S}(\lambda_0, \lambda)\,\ish \num{e-1}$ bits.

\subsection{Quantum information leaks}
We analyse the information which the server could gain on average from measuring the network qubit after \ac{RSP} in terms of the Holevo bound~$\chi$.
For perfect blindness, $\chi=0$, whereas one qubit can maximally transmit $\chi=1$ bit.
We exclude test rounds, where the client steers the network qubit into $\mathsf{Z}$ basis eigenstates, from this analysis.
Including dummy qubits would strictly reduce the Holevo information because these states do not correlate with $\theta_\ell$.
Hence, the resulting improvement in overall privacy depends on the fraction of rounds which are tests.

The server receives quantum states that are generated by steering of the network qubit.
For every qubit in the computation, the client chooses secretly one of 4 measurement bases, $\hat{B}_i$, $i\in\{1,2,3,4\}$, to steer the network qubit into $\rho_i^{+}$ with probability $q_i$ or $\rho_i^{-}$ with probability $(1-q_i)$, depending on the measurement outcome.
Averaged over the outcomes\footnote{
    If the basis angle $\theta_\ell$ is leaked, a simple measurement of the network qubit deterministically reveals also the client's measurement outcome.
}, the server receives the states $\rho_i = q_i \rho_i^{+} + (1-q_i) \rho_i^{-}$ with equal frequency:
\begin{align*}
    \rho = \frac{1}{4} \sum_{i=1}^{4} \rho_i \ .
\end{align*}
The Holevo information, which bounds the amount of information contained in this quantum state, is given by
\begin{align}
    \chi(\rho) = S(\rho) - \frac14 \sum_{i=1}^{4} S(\rho_i) \ ,
    \label{eq:Holevo}
\end{align}
where $S(\rho) = -\mathrm{Tr} \left( \rho \log_2(\rho) \right)$.

In our experiment, $\rho_i^{+}$ and $\rho_i^{-}$ are steered by the measurement of an entangled subsystem (a photon), which can be described using \acp{POVM}.
We use the maximally-entangled Bell state $(\ket{s}\ket{0} + \ket{p}\ket{1})/\sqrt{2}$ between the photon and the environment to simulate \ac{RSP}, which is justified by the proof in the section \enquote{\acl{RSP} by steering}.

\subsubsection{Imbalanced and mixed outcomes}
We estimate the blindness of \ac{RSP} using the \acp{POVM} $\{\hat{F}\}$ describing a \ac{PBS} with extinction ratios $\epsilon_{s}$ and $\epsilon_{p}$ for $s$- and $p$-polarised input, and two \acp{APD} that are placed in the (ideally) $s$- and $p$-polarised output ports of the \ac{PBS}, characterised by detection efficiencies $\eta_{s}$ and $\eta_{p}$, respectively:
\begin{align*}
    \hat{F}_p &= \eta_p (1-\epsilon_s)\ketbra{p}{p} + \eta_s \epsilon_p \ketbra{s}{s} \ , \\
    \hat{F}_s &= \eta_s (1-\epsilon_p)\ketbra{s}{s} + \eta_p \epsilon_s \ketbra{p}{p} \ , \\
    \hat{F}_{0} &= \mathsf{1} - \hat{F}_p - \hat{F}_s \ ,
\end{align*}
where $\hat{F}_p$ and $\hat{F}_s$ correspond to photon detection in \ac{APD} $p$ and \ac{APD} $s$, respectively, and $\hat{F}_0$ corresponds to a photon loss event.
As long as \ac{PBS} imperfections act as depolarising noise, such as for finite polarisation extinction, these effects have no effect on the Holevo information.
We may therefore assume without loss of generality $\forall i$ that $\rho_i^{+}$ and $\rho_i^{-}$ are orthogonal pure states, which occur with probability $q_i = \frac{\eta_p}{\eta_p+\eta_s} \approx \frac12$.
Choosing the more favourable configuration of measurement bases\footnote{
    It is less favourable to let $\rho_j^{+} = \ketbra{j \pi /4}{j\pi /4}$, as the error due to imbalanced rates adds constructively.
    However, using $\{\rho_1^{+}, \rho_2^{-}, \rho_3^{+}, \rho_4^{+}\}_j = \ketbra{j \pi /4}{j\pi /4}$ for example, part of the error cancels.
}, the average information contained in one qubit is bounded from above by
\begin{align}
    \chi = \frac{(6-\sqrt{2})\left(q-\frac12\right)^2}{4 \log_2(2)} + \mathcal{O}\left(\left(q-\frac12\right)^4\right) \ .
    \label{eq:Holevo-imbalance}
\end{align}
With the average imbalance $\langle |0.5 - q| \rangle_{U_a, U_b}$ derived from measurements shown in Fig.~\ref{fig:blindness}(c) and specified in the main text, we infer that the information leakage due to this effect is $\chi\approx\num{2.9e-4}$ bits.
Even if static photon loss is introduced to balance the detection efficiency on both detectors, the secret-dependent variation remains the dominant source of leakage: we estimate $\chi\approx\num{2.6e-4}$ bits for this scenario.
While the origin of the variations shown in Fig.~\ref{fig:blindness}(c) has not been confirmed, we suspect that the \acp{EOM} disturb the wavefront of the photons, which affects the mode matching into the optical fibres that are used for convenience to couple the photons into the \acp{APD}.
Because the quantum information is encoded in the polarisation of the photon, their spatial mode is irrelevant to our implementation, rendering this observation of voltage-dependent imbalance of heralds practically insignificant.
Alternatively, this source of leakage could be eliminated fully if the client conditioned heralds on local randomness (at the cost of a \SI{50}{\percent} reduction in the total heralding rate).

\subsubsection{Rotated POVMs}
Using the \acp{POVM} in the product space of the photon polarisation and the network qubit,
\begin{align*}
    \hat{G}_p &= \hat{F}_p \otimes \mathsf{1} \ , \\
    \hat{G}_s &= \hat{F}_s \otimes \hat{Z}(\phi) \ , \\
    \hat{G}_0 &= (\mathsf{1} \otimes \mathsf{1}) - \hat{G}_p - \hat{G}_s \ ,
\end{align*}
the Holevo information can be approximated by $\chi \approx 0.103 \times \phi^2$.

The electronic signal delay between orthogonal heralds shown in Fig.~\ref{fig:blindness}(a) not only leaks information via the classical reference time signal, but also causes a relative $\mathsf{Z}$ rotation of the network qubit depending on the photon measurement outcome.
Inserting $\phi = \Omega_Z \Delta t$, where $\Omega_Z \approx 2\pi \times \SI{14}{\mega\hertz}$ is the \Sr{} Zeeman qubit splitting and $\Delta t = \SI{6.57(2)}{\nano\second}$ is the measured delay difference between the two heralds [Fig.~\ref{fig:blindness}(a)], we obtain $\chi\approx 0.035$ bits.

Once the heralding delays are matched, imperfections in the photon polarisation measurement are expected to dominate information leakage through this channel.
% solve 0.0016 == \sin(\phi/2)^2 for \phi... = 0.08 rad.
Using the average overlap of $0.0016$ between projectors corresponding to different heralds [Fig.~\ref{fig:Orthogonality}], we obtain $\phi \approx \SI{0.08}{\radian}$, resulting in $\chi\approx\num{7e-4}$ bits of potential information leakage.

Alternatively, rather than relying on the orthogonality of the polarisation measurement, the client could exploit the fact that every polarisation measurement basis can be reached by two distinct \ac{EOM} voltage settings and switch randomly between them.
The effectiveness of this approach would depend on the accuracy of the voltage calibration, which could be increased by acquiring calibration data with sufficient precision.

\subsection{Measured quantum information leaks}
We use quantum state tomography to reconstruct the steered state of the network qubit for all eight equatorial target states of the measurement-based protocol (see average fidelities w.r.t.\ $\ket{\theta}$ in Fig.~\ref{fig:stability}(c)).
Using the estimated density matrices, we calculate the Holevo information according to Eq.~\eqref{eq:Holevo} and obtain of $0.031(4)$ bits per qubit, averaged over all runs where the polarisation measurement bases matched the (more favourable) configuration that was assumed above in the derivation of Eq.~\eqref{eq:Holevo-imbalance}.

\section{Security and Robustness for 2-qubit linear cluster states}
\begin{figure}[htbp]
    \centering
    \includegraphics{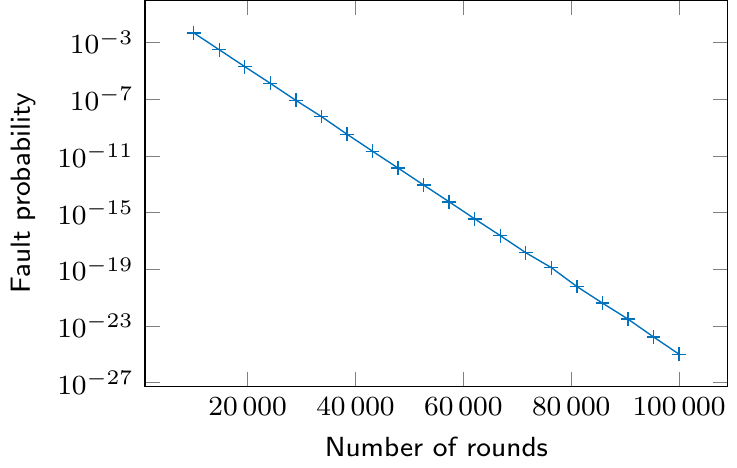}
    \caption{%
        We evaluate $p_\mathrm{fault}$ for a 2-qubit cluster state as a function of the number of rounds in the protocol proposed in Ref.~\cite{leichtle_verifying_2021}, assuming $p_\mathrm{max} = 0.185$.
        We obtain $\tau \eqsim 0.6$ and $\omega_\mathrm{max} \eqsim 0.205$ for all data points shown (crosses).
    }
    \label{fig:fault-probability}
\end{figure}
We evaluate an upper bound to the probability of the client accepting an incorrect result, $\mathrm{Pr}[\mathrm{fail}]$, for the verification protocol proposed in Ref.~\cite{leichtle_verifying_2021} by numerically minimising Eq.~(E5) from Ref.~\cite{leichtle_verifying_2021} for $k=2$ qubits and an intrinsic algorithmic error probability $p=0$.
During this minimisation, we constrain the probability of the client rejecting any result due to noise, $p_\mathrm{rej}$, by setting $p_\mathrm{rej} = \mathrm{Pr}[\mathrm{fail}]$.
The minimum fault probability, $p_\mathrm{fault} := \min_\mathcal{C} \mathrm{Pr}[\mathrm{fail}]$, where $\mathcal{C}$ denotes the constraints, is shown in Fig.~\ref{fig:fault-probability} for a variable number of total rounds, $n$, including $\tau n$ test rounds.
The associated optimal threshold, $\omega$, the fraction of test rounds, $\tau$, and security parameters ($\varphi, \epsilon_1, \epsilon_2, \epsilon_3$) are adjusted by the minimiser.
The probability of a failure, i.e., the probability of a security issue (accepting an incorrect result, $\mathrm{Pr}[\mathrm{fail}]$) and that of a robustness issue (rejecting any result, $p_\mathrm{rej}$), decreases exponentially with the number of rounds.

\section{Remote state preparation by steering}\label{sec:steering-proof}

In this section, we show that steering can be used to securely implement remote state preparation.
In particular, only one-way quantum communication from the server to the client is sufficient, and the preparation of entangled pairs by the server does not need to be trusted.

\begin{figure}[ht]
    \begin{resource}[H]
        \caption{Remote State Preparation}
        \label{resource:rsp}
        \vspace*{0.2cm}
        \begin{algorithmic}[0]
            \STATE \textbf{Inputs:}
            \begin{itemize}
                \item Client: the classical description of a single-qubit unitary $U$.
                \item Server: no input.
            \end{itemize}
            \STATE \textbf{Outputs:}
            \begin{itemize}
                \item Client: no output.
                \item Server: the single-qubit state $U|0\rangle$.
            \end{itemize}
        \end{algorithmic}
    \end{resource}
\end{figure}

\begin{figure}[ht]
    \begin{protocol}[H]
        \caption{RSP by steering}
        \label{protocol:rsp-steering}
        \vspace*{0.2cm}
        \begin{algorithmic}[0]
            \STATE \textbf{Inputs:}
            \begin{itemize}
                \item Client: the classical description of a single-qubit unitary $U$.
                \item Server: no input.
            \end{itemize}
            \STATE \textbf{Required resources:}
            \begin{itemize}
                \item Secure one-way quantum channel from server to client.
                \item Secure one-way classical channel from client to server.
            \end{itemize}
            \STATE \textbf{Instructions:}
            \begin{enumerate}
                \item The server prepares a two-qubit Bell state $|\Psi\rangle = \frac{1}{\sqrt{2}} \left( |00\rangle + |11\rangle \right)$, and sends one of the qubits to the client.
                \item The client samples a single-qubit unitary $U_1$ randomly from the Haar measure.
                It then applies $U_1$ to the state received by the server and performs a measurement on it in the computational basis, obtaining measurement outcome $m$.
                \item The client sends the classical description of the single-qubit unitary $U_2 = U \mathrm{X}^m U_1$ to the server.
                \item The server applies $U_2$ to the remaining single-qubit state, and sets it as its output.
            \end{enumerate}
        \end{algorithmic}
    \end{protocol}
\end{figure}

\begin{lemma}
    Protocol~\ref{protocol:rsp-steering} implements Resource~\ref{resource:rsp} with perfect correctness.
\end{lemma}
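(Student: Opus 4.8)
The plan is to verify that the server's final output state equals $U|0\rangle$ for every possible measurement outcome $m\in\{0,1\}$, thereby establishing perfect correctness (i.e.\ the output matches the ideal resource deterministically, with no dependence on the random choice of $U_1$). The natural tool is the standard Bell-state steering identity: for any single-qubit operator $A$, applying $A$ to the first qubit of $|\Psi\rangle = \tfrac{1}{\sqrt2}(|00\rangle+|11\rangle)$ is equivalent to applying its transpose $A^{T}$ to the second qubit, since $(A\otimes\mathsf{1})|\Psi\rangle = (\mathsf{1}\otimes A^{T})|\Psi\rangle$. I would state this identity first, as it is the algebraic backbone of the whole argument.

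\textbf{Key steps.} First I would write the state after the client applies $U_1$ to its half, namely $(U_1\otimes\mathsf{1})|\Psi\rangle$, and use the steering identity to rewrite it as $(\mathsf{1}\otimes U_1^{T})|\Psi\rangle$. Second, I would project the client's qubit onto the computational-basis outcome $m$, which collapses the server's qubit. Concretely, projecting the first qubit onto $|m\rangle$ in the state $(U_1\otimes\mathsf{1})|\Psi\rangle$ leaves the server holding (up to normalisation) the state $U_1^{T}|m\rangle$, or equivalently after tracking the $U_1$ through: the post-measurement server state is proportional to $(U_1^{T}\cdot\text{something})|m\rangle$; I would carry the $1/\sqrt2$ factors carefully so the normalised server state comes out to $U_1^{\ast}|m\rangle$ (the precise conjugate/transpose bookkeeping is the one place to be careful). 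Third, I would apply the server's correction $U_2 = U\,\mathsf{X}^{m}U_1^{-1}$ — here I note the protocol writes $U_2 = U\mathsf{X}^m U_1$, so I would check whether the intended operator uses $U_1$ or $U_1^{-1}$/$U_1^{\dagger}$ and reconcile the statement with the algebra, since the correction must exactly undo the steering-induced $U_1$ factor and the $\mathsf{X}^m$ must undo the outcome-dependent bit flip. Composing, the server holds $U\,\mathsf{X}^{m}(\cdots)|m\rangle = U|0\rangle$, independent of both $m$ and $U_1$.

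\textbf{Main obstacle.} The only real subtlety is the transpose/conjugation bookkeeping introduced by the steering identity together with the exact form of the announced correction $U_2$: I must confirm that $U_2$ composed with the steered state collapses precisely to $U|0\rangle$ for \emph{both} outcomes $m=0$ and $m=1$, and in particular that the $\mathsf{X}^{m}$ factor correctly compensates the outcome-dependence while the $U_1$-dependence cancels entirely. Everything else — normalisation constants and the fact that $U_1$ being Haar-random plays no role in correctness (it is there only for the blindness/security claims, not for this lemma) — is routine. I would finish by remarking that, since the output is $U|0\rangle$ exactly for every outcome, Protocol~\ref{protocol:rsp-steering} realises Resource~\ref{resource:rsp} with perfect correctness, as claimed.
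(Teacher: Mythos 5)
Your outline follows the same route as the paper's proof: compute the server's post-measurement (steered) state, then check that the announced correction maps it to $U|0\rangle$ for both outcomes $m$. The paper does exactly this in two lines, \emph{asserting} that the steered state is $U_1^\dagger \mathrm{X}^m|0\rangle$, whence $U_2\, U_1^\dagger \mathrm{X}^m|0\rangle = U\mathrm{X}^m U_1 U_1^\dagger \mathrm{X}^m|0\rangle = U|0\rangle$. Your steering computation is the careful version of that assertion, and your first answer is the right one: projecting the client's qubit onto $|m\rangle$ after it applies $U_1$ leaves the server with $\tfrac{1}{\sqrt{2}}\,U_1^{T}|m\rangle$, i.e.\ outcome probability $\tfrac12$ and normalised state $U_1^{T}|m\rangle$. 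Your alternative reading \enquote{$U_1^{*}|m\rangle$} is a slip --- normalisation cannot turn a transpose into a conjugate.

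The gap is that the reconciliation you defer is the entire content of the lemma, and neither of your prepared resolutions closes it. With the steered state $U_1^{T}|m\rangle$, the protocol's announced $U_2 = U\mathrm{X}^m U_1$ outputs $U\mathrm{X}^m U_1 U_1^{T}|m\rangle$, which equals $U|0\rangle$ only if $U_1 U_1^{T} = \mathsf{1}$, i.e.\ only for (essentially) real $U_1$; e.g.\ $U_1 = \exp(\mathrm{i}\tfrac{\pi}{4}\mathrm{X})$ gives $U_1U_1^{T} = \mathrm{i}\mathrm{X}$ and output $\mathrm{i}\,U|1\rangle$. Your candidate $U_2 = U\mathrm{X}^m U_1^{-1}$ fails as well, since $U_1^{\dagger}U_1^{T} = \mathsf{1}$ only for symmetric $U_1$. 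The correction that works identically is $U_2 = U\mathrm{X}^m \overline{U_1}$, because $\overline{U_1}\,U_1^{T} = \overline{U_1 U_1^{\dagger}} = \mathsf{1}$. To square this with the protocol as printed one relabels: the Haar measure is invariant under entrywise conjugation, so the client may equivalently be described as sampling $U_1$ and physically applying $\overline{U_1}$; then the steered state is $\overline{U_1}^{\,T}|m\rangle = U_1^{\dagger}\mathrm{X}^m|0\rangle$ and the printed $U_2 = U\mathrm{X}^m U_1$ does invert it --- which is precisely the convention the paper's proof adopts silently in its opening assertion. A complete proof must commit to one convention and carry it consistently through both the steered state and the announcement; as written, your plan terminates in an algebraic mismatch that the two operators you propose to try ($U_1$ or $U_1^{-1}$) cannot repair.
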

\begin{proof}
    After the client's measurement, the remaining single-qubit state in the server's quantum register can be described by $U_1^\dagger \mathrm{X}^m |0\rangle$. The server's output therefore becomes $U_2 U_1^\dagger \mathrm{X}^m |0\rangle = U |0\rangle$.
\end{proof}

\begin{lemma}
    Protocol~\ref{protocol:rsp-steering} implements Resource~\ref{resource:rsp} with perfect security against a malicious server.
\end{lemma}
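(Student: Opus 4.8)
The plan is to work in the same composable (Abstract Cryptography) framework underlying the correctness lemma and to exhibit a simulator that reproduces the malicious server's entire view while accessing only the ideal resource output $U\ket{0}$. The first observation is that the communication is one-way at the quantum level: qubits flow only from the server to the client, so after the server dispatches its qubit the sole object returned to it is the classical description $U_2 = U\mathrm{X}^m U_1$. Hence the server's final view is a function of its own (arbitrarily prepared, untrusted) register together with this single classical message, and proving security amounts to showing that this joint object depends on the secret $U$ only through the state $U\ket{0}$ that the resource is in any case entitled to deliver.

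First I would fix the most general server behaviour: it prepares an arbitrary state $\ket{\xi}$ on the qubit $R$ it sends together with a retained register $K$, and I would write the resulting (unnormalised) post-interaction state $\rho^{\mathrm{server}}(U)$ on $K$ and on the classical register holding $U_2$ as an integral over the Haar-random $U_1$ and a sum over the measurement outcome $m$. The engine of the proof is Haar left-invariance: in each branch I substitute $V = U\mathrm{X}^m U_1$, so that $d\mu(U_1)=d\mu(V)$ and the classical register simply records $V$. The secret $U$ then survives only inside the conditional state steered onto $K$, and there I would invoke the identity $\bra{m}\mathrm{X}^m = \bra{0}$ (valid for both $m\in\{0,1\}$) to collapse the two measurement branches: after the substitution each branch contributes the identical $K$-state built from $\bra{0}U^\dagger V$, so $m$ drops out and the surviving $U$-dependence enters only through the combination $\bra{0}U^\dagger = (U\ket{0})^\dagger$, i.e.\ through $U\ket{0}$ alone.

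With this in hand I would define the simulator $\mathcal S$ as the fixed ($U$-independent) quantum operation that, given a single copy of $U\ket{0}$ from the ideal resource, outputs exactly the reconstructed pair (classical $U_2$, retained register $K$). Establishing $\rho^{\mathrm{server}}(U)=\mathcal S\!\left(U\ketbra{0}{0}U^\dagger\right)$ for every prepared $\ket{\xi}$ then proves perfect security, since anything a malicious server could extract in the real protocol it could equally extract from $U\ket{0}$ alone in the ideal world, and the distinguishing advantage is identically zero.

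The step I expect to be the main obstacle is carrying the change-of-variables argument through for an untrusted, possibly entangled preparation rather than the honest Bell pair: one must verify that the correlations between the steered register $K$ and the message $V$ never encode more of $U$ than is already contained in $U\ket{0}$, and that the branch-collapsing identity leaves a genuinely $U$-covariant expression rather than merely a $U$-independent marginal, which in fact fails for non-maximally-entangled inputs. Making the resulting simulator manifestly completely positive, trace-preserving and $U$-independent, and confirming that the reconstruction is exact rather than approximate so that the claimed \emph{perfect} security holds, is where the real care is needed.
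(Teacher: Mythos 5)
Your analysis of the real protocol is correct and coincides with the first half of the paper's proof: after the change of variables $V = U\mathrm{X}^m U_1$ (Haar invariance plus $\bra{m}\mathrm{X}^m = \bra{0}$), the server's joint view collapses to a Haar-random announced unitary $V$ together with the steered state $\bra{0}(U^\dagger V\otimes \mathrm{I})\ket{\xi}$ on its retained register, and this reduction holds for every malicious input $\ket{\xi}$, entangled or not (your worry that something "fails for non-maximally-entangled inputs" is misplaced here). The genuine gap is in the second half: you define the simulator as \emph{the} fixed $U$-independent operation that, given one copy of $U\ket{0}$, "outputs exactly the reconstructed pair" --- but the existence of such an operation is precisely what has to be proved, and the inference from "the view is a function of $U$ only through $U\ket{0}$" to "a $U$-independent CPTP map acting on a single copy of $U\ket{0}$ produces the view" is invalid. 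Concretely, in your parametrization the simulator would pre-sample the announcement $V$ and then have to steer the server's register, for the honest Bell input $\ket{\xi}=(\ket{00}+\ket{11})/\sqrt{2}$, into the state $\overline{V^\dagger U\ket{0}}$; as a map on $U\ketbra{0}{0}U^\dagger$ this conditional is of transpose/conjugation type, which is \emph{not} completely positive, whereas any strategy that announces a pre-sampled, outcome-independent $V$ induces conditionals that are manifestly completely positive. Since pure qubit states span the Hermitian operators, the two linear maps would have to coincide, a contradiction: no simulator of the form you sketch exists. This is exactly the unresolved difficulty surfacing in your final paragraph.

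The paper closes this gap with an explicit teleportation-based simulator, which is the key idea your proposal is missing: the simulator applies a Haar-random $U_1$ to the ideal output $U\ket{0}$, performs a \emph{Bell measurement} between that qubit and the qubit received from the distinguisher, obtains outcomes $(m_1,m_2)$, and announces $U_2 = U_1^\dagger \mathrm{Z}^{m_1}\mathrm{X}^{m_2}$. The announcement is thus chosen as a function of the measurement outcomes rather than pre-sampled; the outcome-dependent Pauli correction folded into $U_2$ is what turns the unphysical "conditional projection onto an unknown state" into a legitimate quantum instrument (all four Bell outcomes are used), and the Haar twirl by $U_1$ absorbs the complex conjugation introduced by the Bell projection. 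A second change of variables, $U_1 \mapsto \mathrm{Z}^{m_1}\mathrm{X}^{m_2}U_1^\dagger U^\dagger$, then brings the ideal world to the same canonical form $\bra{0}(U_1\otimes\mathrm{I})\ket{\psi}$ with announcement $UU_1$ that you derived for the real world. Without this construction (or an equivalent one), your argument establishes only that the server's view is informationally determined by $U\ket{0}$, which is strictly weaker than the composable security the lemma asserts.
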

\begin{proof}
    As part of the proof, let $\sigma$ be defined as in Simulator~\ref{simulator:rsp-steering-security}.
    \begin{figure}[hb]
    \begin{simulator}
        \caption{}
        \label{simulator:rsp-steering-security}
        \vspace*{0.2cm}
        \begin{algorithmic}[0]
            \STATE \textbf{Instructions:}
            \begin{enumerate}
                \item The simulator expects a single-qubit quantum state $|\phi_1\rangle$ as an input from the ideal functionality on its left interface.
                \item It expects a single-qubit quantum state $|\phi_2\rangle$ as an input from the distinguisher on its right interface.
                \item It samples a single-qubit unitary $U_1$ randomly from the Haar measure.
                \item It applies the two-qubit unitary $U_1 \otimes \mathrm{I}$ to the state $|\phi_1\rangle|\phi_2\rangle$ and performs a Bell measurement on it, obtaining measurement outcomes $m_1$ and $m_2$.
                \item It then sets $U_2 = U_1^\dagger \mathrm{Z}^{m_1} \mathrm{X}^{m_2}$ and outputs the classical description of $U_2$ on its right interface to the distinguisher.
            \end{enumerate}
        \end{algorithmic}
    \end{simulator}
    \end{figure}

    It remains to be shown that the composition of Resource~\ref{resource:rsp} with $\sigma$ (the ideal world) generates the same distribution on its interfaces as the client's instructions of Protocol~\ref{protocol:rsp-steering} (the real world).

    Let $|\psi\rangle$ be the purification of the distinguisher's quantum register just before sending the first qubit of its register to the client. Then, in the real world, after the client's measurement, the state of the server's quantum register can be described (up to a global phase) by
    \begin{align*}
        \langle 0| ( \mathrm{X}^m U_1 \otimes \mathrm{I}) |\psi\rangle,
    \end{align*}
    and the classical message from the client contains the description of the unitary $U X^m U_1$, where $U_1$ is chosen according to the Haar measure. Substituting $U_1$ for $\mathrm{X}^m U_1$ yields
    \begin{align*}
        \langle 0| ( U_1 \otimes \mathrm{I}) |\psi\rangle,
    \end{align*}
    and the classical description of $U U_1$ without changing the distribution of $U_1$.

    In the ideal world, after the simulator's measurement, the state of the server's quantum register can be described (up to a global phase) by
    \begin{align*}
        \langle \Phi^+| ( \mathrm{Z}^{m_1} \mathrm{X}^{m_2} U_1 \otimes \mathrm{I} ) ( U|0\rangle \otimes |\psi\rangle ),
    \end{align*}
    and the classical message from the client contains the description of the unitary $U_1^\dagger \mathrm{Z}^{m_1} \mathrm{X}^{m_2}$, where $U_1$ is chosen according to the Haar measure. This can be rewritten equivalently as
    \begin{align*}
        \langle 0| ( U^\dagger U_1^\dagger \mathrm{Z}^{m_1} \mathrm{X}^{m_2} \otimes \mathrm{I} ) |\psi\rangle.
    \end{align*}
    A change of variables from $U_1$ to $\mathrm{Z}^{m_1} \mathrm{X}^{m_2} U_1^\dagger U^\dagger$, without changing the distribution of $U_1$, yields
    \begin{align*}
        \langle 0| ( U_1 \otimes \mathrm{I} ) |\psi\rangle,
    \end{align*}
    and the classical description of $(\mathrm{Z}^{m_1} \mathrm{X}^{m_2} U_1^\dagger U^\dagger)^\dagger \mathrm{Z}^{m_1} \mathrm{X}^{m_2} = U U_1$, which concludes the proof.
\end{proof}

\begin{acronym}
    \acro{BQP}{bounded-error quantum polynomial time}
    \acro{BQC}{blind quantum computing}
    \acro{RSP}{remote state preparation}
    \acro{EOM}{electro-optic modulator}
    \acro{PBS}{polarising beamsplitter}
    \acro{APD}{avalanche photodiode}
    \acro{POVM}{positive operator-valued measurement}
\end{acronym}


\begin{thebibliography}{44}%
\makeatletter
\providecommand \@ifxundefined [1]{%
 \@ifx{#1\undefined}
}%
\providecommand \@ifnum [1]{%
 \ifnum #1\expandafter \@firstoftwo
 \else \expandafter \@secondoftwo
 \fi
}%
\providecommand \@ifx [1]{%
 \ifx #1\expandafter \@firstoftwo
 \else \expandafter \@secondoftwo
 \fi
}%
\providecommand \natexlab [1]{#1}%
\providecommand \enquote  [1]{``#1''}%
\providecommand \bibnamefont  [1]{#1}%
\providecommand \bibfnamefont [1]{#1}%
\providecommand \citenamefont [1]{#1}%
\providecommand \href@noop [0]{\@secondoftwo}%
\providecommand \href [0]{\begingroup \@sanitize@url \@href}%
\providecommand \@href[1]{\@@startlink{#1}\@@href}%
\providecommand \@@href[1]{\endgroup#1\@@endlink}%
\providecommand \@sanitize@url [0]{\catcode `\\12\catcode `\$12\catcode
  `\&12\catcode `\#12\catcode `\^12\catcode `\_12\catcode `\%12\relax}%
\providecommand \@@startlink[1]{}%
\providecommand \@@endlink[0]{}%
\providecommand \url  [0]{\begingroup\@sanitize@url \@url }%
\providecommand \@url [1]{\endgroup\@href {#1}{\urlprefix }}%
\providecommand \urlprefix  [0]{URL }%
\providecommand \Eprint [0]{\href }%
\providecommand \doibase [0]{https://doi.org/}%
\providecommand \selectlanguage [0]{\@gobble}%
\providecommand \bibinfo  [0]{\@secondoftwo}%
\providecommand \bibfield  [0]{\@secondoftwo}%
\providecommand \translation [1]{[#1]}%
\providecommand \BibitemOpen [0]{}%
\providecommand \bibitemStop [0]{}%
\providecommand \bibitemNoStop [0]{.\EOS\space}%
\providecommand \EOS [0]{\spacefactor3000\relax}%
\providecommand \BibitemShut  [1]{\csname bibitem#1\endcsname}%
\let\auto@bib@innerbib\@empty
%</preamble>
\bibitem [{\citenamefont {Sarma}\ \emph {et~al.}(2019)\citenamefont {Sarma},
  \citenamefont {Chatterjee}, \citenamefont {Gili},\ and\ \citenamefont
  {Yu}}]{sarma_quantum_2019}%
  \BibitemOpen
  \bibfield  {author} {\bibinfo {author} {\bibfnamefont {A.}~\bibnamefont
  {Sarma}}, \bibinfo {author} {\bibfnamefont {R.}~\bibnamefont {Chatterjee}},
  \bibinfo {author} {\bibfnamefont {K.}~\bibnamefont {Gili}},\ and\ \bibinfo
  {author} {\bibfnamefont {T.}~\bibnamefont {Yu}},\ }\bibfield  {title}
  {\bibinfo {title} {Quantum unsupervised and supervised learning on
  superconducting processors},\ }\href@noop {} {\bibfield  {journal} {\bibinfo
  {journal} {Quantum Inf. Comput.}\ }\textbf {\bibinfo {volume} {20}},\
  \bibinfo {pages} {541} (\bibinfo {year} {2019})}\BibitemShut {NoStop}%
\bibitem [{\citenamefont {Alcazar}\ \emph {et~al.}(2020)\citenamefont
  {Alcazar}, \citenamefont {Leyton-Ortega},\ and\ \citenamefont
  {Perdomo-Ortiz}}]{alcazar_classical_2020}%
  \BibitemOpen
  \bibfield  {author} {\bibinfo {author} {\bibfnamefont {J.}~\bibnamefont
  {Alcazar}}, \bibinfo {author} {\bibfnamefont {V.}~\bibnamefont
  {Leyton-Ortega}},\ and\ \bibinfo {author} {\bibfnamefont {A.}~\bibnamefont
  {Perdomo-Ortiz}},\ }\bibfield  {title} {\bibinfo {title} {Classical versus
  quantum models in machine learning: insights from a finance application},\
  }\href@noop {} {\bibfield  {journal} {\bibinfo  {journal} {Mach. Learn.:\
  Sci. Technol.}\ }\textbf {\bibinfo {volume} {1}},\ \bibinfo {pages} {035003}
  (\bibinfo {year} {2020})}\BibitemShut {NoStop}%
\bibitem [{\citenamefont {Proctor}\ \emph {et~al.}(2022)\citenamefont
  {Proctor}, \citenamefont {Rudinger}, \citenamefont {Young}, \citenamefont
  {Nielsen},\ and\ \citenamefont {Blume-Kohout}}]{proctor_measuring_2022}%
  \BibitemOpen
  \bibfield  {author} {\bibinfo {author} {\bibfnamefont {T.}~\bibnamefont
  {Proctor}}, \bibinfo {author} {\bibfnamefont {K.}~\bibnamefont {Rudinger}},
  \bibinfo {author} {\bibfnamefont {K.}~\bibnamefont {Young}}, \bibinfo
  {author} {\bibfnamefont {E.}~\bibnamefont {Nielsen}},\ and\ \bibinfo {author}
  {\bibfnamefont {R.}~\bibnamefont {Blume-Kohout}},\ }\bibfield  {title}
  {\bibinfo {title} {Measuring the capabilities of quantum computers},\ }\href
  {https://doi.org/10.1038/s41567-021-01409-7} {\bibfield  {journal} {\bibinfo
  {journal} {Nat. Phys.}\ }\textbf {\bibinfo {volume} {18}},\ \bibinfo {pages}
  {75} (\bibinfo {year} {2022})}\BibitemShut {NoStop}%
\bibitem [{\citenamefont {Amaro}\ \emph {et~al.}(2022)\citenamefont {Amaro},
  \citenamefont {Modica}, \citenamefont {Rosenkranz}, \citenamefont
  {Fiorentini}, \citenamefont {Benedetti},\ and\ \citenamefont
  {Lubasch}}]{amaro_filtering_2022}%
  \BibitemOpen
  \bibfield  {author} {\bibinfo {author} {\bibfnamefont {D.}~\bibnamefont
  {Amaro}}, \bibinfo {author} {\bibfnamefont {C.}~\bibnamefont {Modica}},
  \bibinfo {author} {\bibfnamefont {M.}~\bibnamefont {Rosenkranz}}, \bibinfo
  {author} {\bibfnamefont {M.}~\bibnamefont {Fiorentini}}, \bibinfo {author}
  {\bibfnamefont {M.}~\bibnamefont {Benedetti}},\ and\ \bibinfo {author}
  {\bibfnamefont {M.}~\bibnamefont {Lubasch}},\ }\bibfield  {title} {\bibinfo
  {title} {Filtering variational quantum algorithms for combinatorial
  optimization},\ }\href {https://doi.org/10.1088/2058-9565/ac3e54} {\bibfield
  {journal} {\bibinfo  {journal} {Quantum Sci. Technol.}\ }\textbf {\bibinfo
  {volume} {7}},\ \bibinfo {pages} {015021} (\bibinfo {year}
  {2022})}\BibitemShut {NoStop}%
\bibitem [{\citenamefont {Kirsopp}\ \emph {et~al.}(2022)\citenamefont
  {Kirsopp}, \citenamefont {Di~Paola}, \citenamefont {Manrique}, \citenamefont
  {Krompiec}, \citenamefont {Greene-Diniz}, \citenamefont {Guba}, \citenamefont
  {Meyder}, \citenamefont {Wolf}, \citenamefont {Strahm},\ and\ \citenamefont
  {Muñoz~Ramo}}]{kirsopp_quantum_2022}%
  \BibitemOpen
  \bibfield  {author} {\bibinfo {author} {\bibfnamefont {J.~J.~M.}\
  \bibnamefont {Kirsopp}}, \bibinfo {author} {\bibfnamefont {C.}~\bibnamefont
  {Di~Paola}}, \bibinfo {author} {\bibfnamefont {D.~Z.}\ \bibnamefont
  {Manrique}}, \bibinfo {author} {\bibfnamefont {M.}~\bibnamefont {Krompiec}},
  \bibinfo {author} {\bibfnamefont {G.}~\bibnamefont {Greene-Diniz}}, \bibinfo
  {author} {\bibfnamefont {W.}~\bibnamefont {Guba}}, \bibinfo {author}
  {\bibfnamefont {A.}~\bibnamefont {Meyder}}, \bibinfo {author} {\bibfnamefont
  {D.}~\bibnamefont {Wolf}}, \bibinfo {author} {\bibfnamefont {M.}~\bibnamefont
  {Strahm}},\ and\ \bibinfo {author} {\bibfnamefont {D.}~\bibnamefont
  {Muñoz~Ramo}},\ }\bibfield  {title} {\bibinfo {title} {Quantum computational
  quantification of protein–ligand interactions},\ }\href
  {https://doi.org/https://doi.org/10.1002/qua.26975} {\bibfield  {journal}
  {\bibinfo  {journal} {Int. J. Quantum Chem.}\ }\textbf {\bibinfo {volume}
  {122}},\ \bibinfo {pages} {e26975} (\bibinfo {year} {2022})}\BibitemShut
  {NoStop}%
\bibitem [{\citenamefont {Broadbent}\ \emph {et~al.}(2009)\citenamefont
  {Broadbent}, \citenamefont {Fitzsimons},\ and\ \citenamefont
  {Kashefi}}]{broadbent_universal_2009}%
  \BibitemOpen
  \bibfield  {author} {\bibinfo {author} {\bibfnamefont {A.}~\bibnamefont
  {Broadbent}}, \bibinfo {author} {\bibfnamefont {J.}~\bibnamefont
  {Fitzsimons}},\ and\ \bibinfo {author} {\bibfnamefont {E.}~\bibnamefont
  {Kashefi}},\ }\bibfield  {title} {\bibinfo {title} {Universal {Blind}
  {Quantum} {Computation}},\ }in\ \href {https://doi.org/10.1109/FOCS.2009.36}
  {\emph {\bibinfo {booktitle} {50th Annu. IEEE Symp. Foundations Comput.
  Sci.}}}\ (\bibinfo {year} {2009})\ pp.\ \bibinfo {pages}
  {517--526}\BibitemShut {NoStop}%
\bibitem [{\citenamefont {Fitzsimons}\ and\ \citenamefont
  {Kashefi}(2017)}]{fitzsimons_unconditionally_2017}%
  \BibitemOpen
  \bibfield  {author} {\bibinfo {author} {\bibfnamefont {J.~F.}\ \bibnamefont
  {Fitzsimons}}\ and\ \bibinfo {author} {\bibfnamefont {E.}~\bibnamefont
  {Kashefi}},\ }\bibfield  {title} {\bibinfo {title} {Unconditionally
  verifiable blind quantum computation},\ }\href
  {https://doi.org/10.1103/PhysRevA.96.012303} {\bibfield  {journal} {\bibinfo
  {journal} {Phys. Rev. A}\ }\textbf {\bibinfo {volume} {96}},\ \bibinfo
  {pages} {012303} (\bibinfo {year} {2017})}\BibitemShut {NoStop}%
\bibitem [{\citenamefont {Gheorghiu}\ \emph {et~al.}(2019)\citenamefont
  {Gheorghiu}, \citenamefont {Kapourniotis},\ and\ \citenamefont
  {Kashefi}}]{gheorghiu_verification_2019}%
  \BibitemOpen
  \bibfield  {author} {\bibinfo {author} {\bibfnamefont {A.}~\bibnamefont
  {Gheorghiu}}, \bibinfo {author} {\bibfnamefont {T.}~\bibnamefont
  {Kapourniotis}},\ and\ \bibinfo {author} {\bibfnamefont {E.}~\bibnamefont
  {Kashefi}},\ }\bibfield  {title} {\bibinfo {title} {Verification of {Quantum}
  {Computation}: {An} {Overview} of {Existing} {Approaches}},\ }\href
  {https://doi.org/10.1007/s00224-018-9872-3} {\bibfield  {journal} {\bibinfo
  {journal} {Theory Comput. Syst.}\ }\textbf {\bibinfo {volume} {63}},\
  \bibinfo {pages} {715} (\bibinfo {year} {2019})}\BibitemShut {NoStop}%
\bibitem [{\citenamefont {Badertscher}\ \emph {et~al.}()\citenamefont
  {Badertscher}, \citenamefont {Cojocaru}, \citenamefont {Colisson},
  \citenamefont {Kashefi}, \citenamefont {Leichtle}, \citenamefont {Mantri},\
  and\ \citenamefont {Wallden}}]{badertscher_security_2020}%
  \BibitemOpen
  \bibfield  {author} {\bibinfo {author} {\bibfnamefont {C.}~\bibnamefont
  {Badertscher}}, \bibinfo {author} {\bibfnamefont {A.}~\bibnamefont
  {Cojocaru}}, \bibinfo {author} {\bibfnamefont {L.}~\bibnamefont {Colisson}},
  \bibinfo {author} {\bibfnamefont {E.}~\bibnamefont {Kashefi}}, \bibinfo
  {author} {\bibfnamefont {D.}~\bibnamefont {Leichtle}}, \bibinfo {author}
  {\bibfnamefont {A.}~\bibnamefont {Mantri}},\ and\ \bibinfo {author}
  {\bibfnamefont {P.}~\bibnamefont {Wallden}},\ }\bibfield  {title} {\bibinfo
  {title} {Security limitations of classical-client delegated quantum
  computing},\ }in\ \href@noop {} {\emph {\bibinfo {booktitle} {Advances in
  Cryptology – {ASIACRYPT} 2020}}},\ \bibinfo {editor} {edited by\ \bibinfo
  {editor} {\bibfnamefont {S.}~\bibnamefont {Moriai}}\ and\ \bibinfo {editor}
  {\bibfnamefont {H.}~\bibnamefont {Wang}}}\ (\bibinfo  {publisher} {Springer
  International Publishing})\ pp.\ \bibinfo {pages} {667--696}\BibitemShut
  {NoStop}%
\bibitem [{\citenamefont {Cojocaru}\ \emph {et~al.}(2021)\citenamefont
  {Cojocaru}, \citenamefont {Colisson}, \citenamefont {Kashefi},\ and\
  \citenamefont {Wallden}}]{cojocaru_possibility_2021}%
  \BibitemOpen
  \bibfield  {author} {\bibinfo {author} {\bibfnamefont {A.}~\bibnamefont
  {Cojocaru}}, \bibinfo {author} {\bibfnamefont {L.}~\bibnamefont {Colisson}},
  \bibinfo {author} {\bibfnamefont {E.}~\bibnamefont {Kashefi}},\ and\ \bibinfo
  {author} {\bibfnamefont {P.}~\bibnamefont {Wallden}},\ }\bibfield  {title}
  {\bibinfo {title} {On the possibility of classical client blind quantum
  computing},\ }\bibfield  {journal} {\bibinfo  {journal} {Cryptography}\
  }\textbf {\bibinfo {volume} {5}},\ \href
  {https://doi.org/10.3390/cryptography5010003} {10.3390/cryptography5010003}
  (\bibinfo {year} {2021})\BibitemShut {NoStop}%
\bibitem [{\citenamefont {Barz}\ \emph {et~al.}(2012)\citenamefont {Barz},
  \citenamefont {Kashefi}, \citenamefont {Broadbent}, \citenamefont
  {Fitzsimons}, \citenamefont {Zeilinger},\ and\ \citenamefont
  {Walther}}]{barz_demonstration_2012}%
  \BibitemOpen
  \bibfield  {author} {\bibinfo {author} {\bibfnamefont {S.}~\bibnamefont
  {Barz}}, \bibinfo {author} {\bibfnamefont {E.}~\bibnamefont {Kashefi}},
  \bibinfo {author} {\bibfnamefont {A.}~\bibnamefont {Broadbent}}, \bibinfo
  {author} {\bibfnamefont {J.~F.}\ \bibnamefont {Fitzsimons}}, \bibinfo
  {author} {\bibfnamefont {A.}~\bibnamefont {Zeilinger}},\ and\ \bibinfo
  {author} {\bibfnamefont {P.}~\bibnamefont {Walther}},\ }\bibfield  {title}
  {\bibinfo {title} {Demonstration of blind quantum computing},\ }\href
  {https://doi.org/10.1126/science.1214707} {\bibfield  {journal} {\bibinfo
  {journal} {Science}\ }\textbf {\bibinfo {volume} {335}},\ \bibinfo {pages}
  {303} (\bibinfo {year} {2012})}\BibitemShut {NoStop}%
\bibitem [{\citenamefont {Barz}\ \emph {et~al.}(2013)\citenamefont {Barz},
  \citenamefont {Fitzsimons}, \citenamefont {Kashefi},\ and\ \citenamefont
  {Walther}}]{barz_experimental_2013}%
  \BibitemOpen
  \bibfield  {author} {\bibinfo {author} {\bibfnamefont {S.}~\bibnamefont
  {Barz}}, \bibinfo {author} {\bibfnamefont {J.~F.}\ \bibnamefont
  {Fitzsimons}}, \bibinfo {author} {\bibfnamefont {E.}~\bibnamefont
  {Kashefi}},\ and\ \bibinfo {author} {\bibfnamefont {P.}~\bibnamefont
  {Walther}},\ }\bibfield  {title} {\bibinfo {title} {Experimental verification
  of quantum computation},\ }\href {https://doi.org/10.1038/nphys2763}
  {\bibfield  {journal} {\bibinfo  {journal} {Nat. Phys.}\ }\textbf {\bibinfo
  {volume} {9}},\ \bibinfo {pages} {727} (\bibinfo {year} {2013})}\BibitemShut
  {NoStop}%
\bibitem [{\citenamefont {Fisher}\ \emph {et~al.}(2014)\citenamefont {Fisher},
  \citenamefont {Broadbent}, \citenamefont {Shalm}, \citenamefont {Yan},
  \citenamefont {Lavoie}, \citenamefont {Prevedel}, \citenamefont {Jennewein},\
  and\ \citenamefont {Resch}}]{fisher_quantum_2014}%
  \BibitemOpen
  \bibfield  {author} {\bibinfo {author} {\bibfnamefont {K.~A.~G.}\
  \bibnamefont {Fisher}}, \bibinfo {author} {\bibfnamefont {A.}~\bibnamefont
  {Broadbent}}, \bibinfo {author} {\bibfnamefont {L.~K.}\ \bibnamefont
  {Shalm}}, \bibinfo {author} {\bibfnamefont {Z.}~\bibnamefont {Yan}}, \bibinfo
  {author} {\bibfnamefont {J.}~\bibnamefont {Lavoie}}, \bibinfo {author}
  {\bibfnamefont {R.}~\bibnamefont {Prevedel}}, \bibinfo {author}
  {\bibfnamefont {T.}~\bibnamefont {Jennewein}},\ and\ \bibinfo {author}
  {\bibfnamefont {K.~J.}\ \bibnamefont {Resch}},\ }\bibfield  {title} {\bibinfo
  {title} {Quantum computing on encrypted data},\ }\href@noop {} {\bibfield
  {journal} {\bibinfo  {journal} {Nat. Commun.}\ }\textbf {\bibinfo {volume}
  {5}},\ \bibinfo {pages} {3074} (\bibinfo {year} {2014})}\BibitemShut
  {NoStop}%
\bibitem [{\citenamefont {Greganti}\ \emph {et~al.}(2016)\citenamefont
  {Greganti}, \citenamefont {Roehsner}, \citenamefont {Barz}, \citenamefont
  {Morimae},\ and\ \citenamefont {Walther}}]{greganti_demonstration_2016}%
  \BibitemOpen
  \bibfield  {author} {\bibinfo {author} {\bibfnamefont {C.}~\bibnamefont
  {Greganti}}, \bibinfo {author} {\bibfnamefont {M.-C.}\ \bibnamefont
  {Roehsner}}, \bibinfo {author} {\bibfnamefont {S.}~\bibnamefont {Barz}},
  \bibinfo {author} {\bibfnamefont {T.}~\bibnamefont {Morimae}},\ and\ \bibinfo
  {author} {\bibfnamefont {P.}~\bibnamefont {Walther}},\ }\bibfield  {title}
  {\bibinfo {title} {Demonstration of measurement-only blind quantum
  computing},\ }\href {https://doi.org/10.1088/1367-2630/18/1/013020}
  {\bibfield  {journal} {\bibinfo  {journal} {New J. Phys.}\ }\textbf {\bibinfo
  {volume} {18}},\ \bibinfo {pages} {013020} (\bibinfo {year}
  {2016})}\BibitemShut {NoStop}%
\bibitem [{\citenamefont {Li}\ \emph {et~al.}(2015)\citenamefont {Li},
  \citenamefont {Humphreys}, \citenamefont {Mendoza},\ and\ \citenamefont
  {Benjamin}}]{li_resource_2015}%
  \BibitemOpen
  \bibfield  {author} {\bibinfo {author} {\bibfnamefont {Y.}~\bibnamefont
  {Li}}, \bibinfo {author} {\bibfnamefont {P.~C.}\ \bibnamefont {Humphreys}},
  \bibinfo {author} {\bibfnamefont {G.~J.}\ \bibnamefont {Mendoza}},\ and\
  \bibinfo {author} {\bibfnamefont {S.~C.}\ \bibnamefont {Benjamin}},\
  }\bibfield  {title} {\bibinfo {title} {Resource {Costs} for
  {Fault}-{Tolerant} {Linear} {Optical} {Quantum} {Computing}},\ }\href
  {https://doi.org/10.1103/PhysRevX.5.041007} {\bibfield  {journal} {\bibinfo
  {journal} {Phys. Rev. X}\ }\textbf {\bibinfo {volume} {5}},\ \bibinfo {pages}
  {041007} (\bibinfo {year} {2015})}\BibitemShut {NoStop}%
\bibitem [{\citenamefont {Pfaff}\ \emph {et~al.}(2014)\citenamefont {Pfaff},
  \citenamefont {Hensen}, \citenamefont {Bernien}, \citenamefont {van Dam},
  \citenamefont {Blok}, \citenamefont {Taminiau}, \citenamefont {Tiggelman},
  \citenamefont {Schouten}, \citenamefont {Markham}, \citenamefont {Twitchen},\
  and\ \citenamefont {Hanson}}]{pfaff_unconditional_2014}%
  \BibitemOpen
  \bibfield  {author} {\bibinfo {author} {\bibfnamefont {W.}~\bibnamefont
  {Pfaff}}, \bibinfo {author} {\bibfnamefont {B.~J.}\ \bibnamefont {Hensen}},
  \bibinfo {author} {\bibfnamefont {H.}~\bibnamefont {Bernien}}, \bibinfo
  {author} {\bibfnamefont {S.~B.}\ \bibnamefont {van Dam}}, \bibinfo {author}
  {\bibfnamefont {M.~S.}\ \bibnamefont {Blok}}, \bibinfo {author}
  {\bibfnamefont {T.~H.}\ \bibnamefont {Taminiau}}, \bibinfo {author}
  {\bibfnamefont {M.~J.}\ \bibnamefont {Tiggelman}}, \bibinfo {author}
  {\bibfnamefont {R.~N.}\ \bibnamefont {Schouten}}, \bibinfo {author}
  {\bibfnamefont {M.}~\bibnamefont {Markham}}, \bibinfo {author} {\bibfnamefont
  {D.~J.}\ \bibnamefont {Twitchen}},\ and\ \bibinfo {author} {\bibfnamefont
  {R.}~\bibnamefont {Hanson}},\ }\bibfield  {title} {\bibinfo {title}
  {Unconditional quantum teleportation between distant solid-state quantum
  bits},\ }\href {https://doi.org/10.1126/science.1253512} {\bibfield
  {journal} {\bibinfo  {journal} {Science}\ }\textbf {\bibinfo {volume}
  {345}},\ \bibinfo {pages} {532} (\bibinfo {year} {2014})}\BibitemShut
  {NoStop}%
\bibitem [{\citenamefont {Hucul}\ \emph {et~al.}(2015)\citenamefont {Hucul},
  \citenamefont {Inlek}, \citenamefont {Vittorini}, \citenamefont {Crocker},
  \citenamefont {Debnath}, \citenamefont {Clark},\ and\ \citenamefont
  {Monroe}}]{hucul_modular_2015}%
  \BibitemOpen
  \bibfield  {author} {\bibinfo {author} {\bibfnamefont {D.}~\bibnamefont
  {Hucul}}, \bibinfo {author} {\bibfnamefont {I.~V.}\ \bibnamefont {Inlek}},
  \bibinfo {author} {\bibfnamefont {G.}~\bibnamefont {Vittorini}}, \bibinfo
  {author} {\bibfnamefont {C.}~\bibnamefont {Crocker}}, \bibinfo {author}
  {\bibfnamefont {S.}~\bibnamefont {Debnath}}, \bibinfo {author} {\bibfnamefont
  {S.~M.}\ \bibnamefont {Clark}},\ and\ \bibinfo {author} {\bibfnamefont
  {C.}~\bibnamefont {Monroe}},\ }\bibfield  {title} {\bibinfo {title} {Modular
  entanglement of atomic qubits using photons and phonons},\ }\href
  {https://doi.org/10.1038/nphys3150} {\bibfield  {journal} {\bibinfo
  {journal} {Nat. Phys.}\ }\textbf {\bibinfo {volume} {11}},\ \bibinfo {pages}
  {37} (\bibinfo {year} {2015})}\BibitemShut {NoStop}%
\bibitem [{\citenamefont {Kalb}\ \emph {et~al.}(2017)\citenamefont {Kalb},
  \citenamefont {Reiserer}, \citenamefont {Humphreys}, \citenamefont
  {Bakermans}, \citenamefont {Kamerling}, \citenamefont {Nickerson},
  \citenamefont {Benjamin}, \citenamefont {Twitchen}, \citenamefont {Markham},\
  and\ \citenamefont {Hanson}}]{kalb_entanglement_2017}%
  \BibitemOpen
  \bibfield  {author} {\bibinfo {author} {\bibfnamefont {N.}~\bibnamefont
  {Kalb}}, \bibinfo {author} {\bibfnamefont {A.~A.}\ \bibnamefont {Reiserer}},
  \bibinfo {author} {\bibfnamefont {P.~C.}\ \bibnamefont {Humphreys}}, \bibinfo
  {author} {\bibfnamefont {J.~J.~W.}\ \bibnamefont {Bakermans}}, \bibinfo
  {author} {\bibfnamefont {S.~J.}\ \bibnamefont {Kamerling}}, \bibinfo {author}
  {\bibfnamefont {N.~H.}\ \bibnamefont {Nickerson}}, \bibinfo {author}
  {\bibfnamefont {S.~C.}\ \bibnamefont {Benjamin}}, \bibinfo {author}
  {\bibfnamefont {D.~J.}\ \bibnamefont {Twitchen}}, \bibinfo {author}
  {\bibfnamefont {M.}~\bibnamefont {Markham}},\ and\ \bibinfo {author}
  {\bibfnamefont {R.}~\bibnamefont {Hanson}},\ }\bibfield  {title} {\bibinfo
  {title} {Entanglement distillation between solid-state quantum network
  nodes},\ }\href {https://doi.org/10.1126/science.aan0070} {\bibfield
  {journal} {\bibinfo  {journal} {Science}\ }\textbf {\bibinfo {volume}
  {356}},\ \bibinfo {pages} {928} (\bibinfo {year} {2017})}\BibitemShut
  {NoStop}%
\bibitem [{\citenamefont {Stas}\ \emph {et~al.}(2022)\citenamefont {Stas},
  \citenamefont {Huan}, \citenamefont {Machielse}, \citenamefont {Knall},
  \citenamefont {Suleymanzade}, \citenamefont {Pingault}, \citenamefont
  {Sutula}, \citenamefont {Ding}, \citenamefont {Knaut}, \citenamefont
  {Assumpcao}, \citenamefont {Wei}, \citenamefont {Bhaskar}, \citenamefont
  {Riedinger}, \citenamefont {Sukachev}, \citenamefont {Park}, \citenamefont
  {Lončar}, \citenamefont {Levonian},\ and\ \citenamefont
  {Lukin}}]{stas_robust_2022}%
  \BibitemOpen
  \bibfield  {author} {\bibinfo {author} {\bibfnamefont {P.-J.}\ \bibnamefont
  {Stas}}, \bibinfo {author} {\bibfnamefont {Y.~Q.}\ \bibnamefont {Huan}},
  \bibinfo {author} {\bibfnamefont {B.}~\bibnamefont {Machielse}}, \bibinfo
  {author} {\bibfnamefont {E.~N.}\ \bibnamefont {Knall}}, \bibinfo {author}
  {\bibfnamefont {A.}~\bibnamefont {Suleymanzade}}, \bibinfo {author}
  {\bibfnamefont {B.}~\bibnamefont {Pingault}}, \bibinfo {author}
  {\bibfnamefont {M.}~\bibnamefont {Sutula}}, \bibinfo {author} {\bibfnamefont
  {S.~W.}\ \bibnamefont {Ding}}, \bibinfo {author} {\bibfnamefont {C.~M.}\
  \bibnamefont {Knaut}}, \bibinfo {author} {\bibfnamefont {D.~R.}\ \bibnamefont
  {Assumpcao}}, \bibinfo {author} {\bibfnamefont {Y.-C.}\ \bibnamefont {Wei}},
  \bibinfo {author} {\bibfnamefont {M.~K.}\ \bibnamefont {Bhaskar}}, \bibinfo
  {author} {\bibfnamefont {R.}~\bibnamefont {Riedinger}}, \bibinfo {author}
  {\bibfnamefont {D.~D.}\ \bibnamefont {Sukachev}}, \bibinfo {author}
  {\bibfnamefont {H.}~\bibnamefont {Park}}, \bibinfo {author} {\bibfnamefont
  {M.}~\bibnamefont {Lončar}}, \bibinfo {author} {\bibfnamefont {D.~S.}\
  \bibnamefont {Levonian}},\ and\ \bibinfo {author} {\bibfnamefont {M.~D.}\
  \bibnamefont {Lukin}},\ }\bibfield  {title} {\bibinfo {title} {Robust
  multi-qubit quantum network node with integrated error detection},\ }\href
  {https://doi.org/10.1126/science.add9771} {\bibfield  {journal} {\bibinfo
  {journal} {Science}\ }\textbf {\bibinfo {volume} {378}},\ \bibinfo {pages}
  {557} (\bibinfo {year} {2022})}\BibitemShut {NoStop}%
\bibitem [{\citenamefont {Wilk}\ \emph {et~al.}(2007)\citenamefont {Wilk},
  \citenamefont {Webster}, \citenamefont {Kuhn},\ and\ \citenamefont
  {Rempe}}]{wilk_single-atom_2007}%
  \BibitemOpen
  \bibfield  {author} {\bibinfo {author} {\bibfnamefont {T.}~\bibnamefont
  {Wilk}}, \bibinfo {author} {\bibfnamefont {S.~C.}\ \bibnamefont {Webster}},
  \bibinfo {author} {\bibfnamefont {A.}~\bibnamefont {Kuhn}},\ and\ \bibinfo
  {author} {\bibfnamefont {G.}~\bibnamefont {Rempe}},\ }\bibfield  {title}
  {\bibinfo {title} {Single-atom single-photon quantum interface},\ }\href
  {https://doi.org/10.1126/science.1143835} {\bibfield  {journal} {\bibinfo
  {journal} {Science}\ }\textbf {\bibinfo {volume} {317}},\ \bibinfo {pages}
  {488} (\bibinfo {year} {2007})}\BibitemShut {NoStop}%
\bibitem [{\citenamefont {Drmota}\ \emph {et~al.}(2023)\citenamefont {Drmota},
  \citenamefont {Main}, \citenamefont {Nadlinger}, \citenamefont {Nichol},
  \citenamefont {Weber}, \citenamefont {Ainley}, \citenamefont {Agrawal},
  \citenamefont {Srinivas}, \citenamefont {Araneda}, \citenamefont {Ballance},\
  and\ \citenamefont {Lucas}}]{drmota_robust_2023}%
  \BibitemOpen
  \bibfield  {author} {\bibinfo {author} {\bibfnamefont {P.}~\bibnamefont
  {Drmota}}, \bibinfo {author} {\bibfnamefont {D.}~\bibnamefont {Main}},
  \bibinfo {author} {\bibfnamefont {D.~P.}\ \bibnamefont {Nadlinger}}, \bibinfo
  {author} {\bibfnamefont {B.~C.}\ \bibnamefont {Nichol}}, \bibinfo {author}
  {\bibfnamefont {M.~A.}\ \bibnamefont {Weber}}, \bibinfo {author}
  {\bibfnamefont {E.~M.}\ \bibnamefont {Ainley}}, \bibinfo {author}
  {\bibfnamefont {A.}~\bibnamefont {Agrawal}}, \bibinfo {author} {\bibfnamefont
  {R.}~\bibnamefont {Srinivas}}, \bibinfo {author} {\bibfnamefont
  {G.}~\bibnamefont {Araneda}}, \bibinfo {author} {\bibfnamefont {C.~J.}\
  \bibnamefont {Ballance}},\ and\ \bibinfo {author} {\bibfnamefont {D.~M.}\
  \bibnamefont {Lucas}},\ }\bibfield  {title} {\bibinfo {title} {Robust quantum
  memory in a trapped-ion quantum network node},\ }\href
  {https://doi.org/10.1103/PhysRevLett.130.090803} {\bibfield  {journal}
  {\bibinfo  {journal} {Phys. Rev. Lett.}\ }\textbf {\bibinfo {volume} {130}},\
  \bibinfo {pages} {090803} (\bibinfo {year} {2023})}\BibitemShut {NoStop}%
\bibitem [{\citenamefont {Leichtle}\ \emph {et~al.}(2021)\citenamefont
  {Leichtle}, \citenamefont {Music}, \citenamefont {Kashefi},\ and\
  \citenamefont {Ollivier}}]{leichtle_verifying_2021}%
  \BibitemOpen
  \bibfield  {author} {\bibinfo {author} {\bibfnamefont {D.}~\bibnamefont
  {Leichtle}}, \bibinfo {author} {\bibfnamefont {L.}~\bibnamefont {Music}},
  \bibinfo {author} {\bibfnamefont {E.}~\bibnamefont {Kashefi}},\ and\ \bibinfo
  {author} {\bibfnamefont {H.}~\bibnamefont {Ollivier}},\ }\bibfield  {title}
  {\bibinfo {title} {Verifying {BQP} {Computations} on {Noisy} {Devices} with
  {Minimal} {Overhead}},\ }\href {https://doi.org/10.1103/PRXQuantum.2.040302}
  {\bibfield  {journal} {\bibinfo  {journal} {PRX Quantum}\ }\textbf {\bibinfo
  {volume} {2}},\ \bibinfo {pages} {040302} (\bibinfo {year}
  {2021})}\BibitemShut {NoStop}%
\bibitem [{\citenamefont {Raussendorf}\ and\ \citenamefont
  {Briegel}(2001)}]{raussendorf_one-way_2001}%
  \BibitemOpen
  \bibfield  {author} {\bibinfo {author} {\bibfnamefont {R.}~\bibnamefont
  {Raussendorf}}\ and\ \bibinfo {author} {\bibfnamefont {H.~J.}\ \bibnamefont
  {Briegel}},\ }\bibfield  {title} {\bibinfo {title} {A one-way quantum
  computer},\ }\href {https://doi.org/10.1103/PhysRevLett.86.5188} {\bibfield
  {journal} {\bibinfo  {journal} {Phys. Rev. Lett.}\ }\textbf {\bibinfo
  {volume} {86}},\ \bibinfo {pages} {5188} (\bibinfo {year}
  {2001})}\BibitemShut {NoStop}%
\bibitem [{\citenamefont {Nielsen}(2006)}]{nielsen_cluster-state_2006}%
  \BibitemOpen
  \bibfield  {author} {\bibinfo {author} {\bibfnamefont {M.~A.}\ \bibnamefont
  {Nielsen}},\ }\bibfield  {title} {\bibinfo {title} {Cluster-state quantum
  computation},\ }\href
  {https://doi.org/https://doi.org/10.1016/S0034-4877(06)80014-5} {\bibfield
  {journal} {\bibinfo  {journal} {Rep. Math. Phys.}\ }\textbf {\bibinfo
  {volume} {57}},\ \bibinfo {pages} {147} (\bibinfo {year} {2006})}\BibitemShut
  {NoStop}%
\bibitem [{\citenamefont {Childs}\ \emph {et~al.}(2005)\citenamefont {Childs},
  \citenamefont {Leung},\ and\ \citenamefont {Nielsen}}]{childs_unified_2005}%
  \BibitemOpen
  \bibfield  {author} {\bibinfo {author} {\bibfnamefont {A.~M.}\ \bibnamefont
  {Childs}}, \bibinfo {author} {\bibfnamefont {D.~W.}\ \bibnamefont {Leung}},\
  and\ \bibinfo {author} {\bibfnamefont {M.~A.}\ \bibnamefont {Nielsen}},\
  }\bibfield  {title} {\bibinfo {title} {Unified derivations of
  measurement-based schemes for quantum computation},\ }\href
  {https://doi.org/10.1103/PhysRevA.71.032318} {\bibfield  {journal} {\bibinfo
  {journal} {Phys. Rev. A}\ }\textbf {\bibinfo {volume} {71}},\ \bibinfo
  {pages} {032318} (\bibinfo {year} {2005})}\BibitemShut {NoStop}%
\bibitem [{\citenamefont {Morimae}\ and\ \citenamefont
  {Fujii}(2013)}]{morimae_blind_2013}%
  \BibitemOpen
  \bibfield  {author} {\bibinfo {author} {\bibfnamefont {T.}~\bibnamefont
  {Morimae}}\ and\ \bibinfo {author} {\bibfnamefont {K.}~\bibnamefont
  {Fujii}},\ }\bibfield  {title} {\bibinfo {title} {Blind quantum computation
  protocol in which {Alice} only makes measurements},\ }\href
  {https://doi.org/10.1103/PhysRevA.87.050301} {\bibfield  {journal} {\bibinfo
  {journal} {Phys. Rev. A}\ }\textbf {\bibinfo {volume} {87}},\ \bibinfo
  {pages} {050301(R)} (\bibinfo {year} {2013})}\BibitemShut {NoStop}%
\bibitem [{\citenamefont {Fitzsimons}(2017)}]{fitzsimons_private_2017}%
  \BibitemOpen
  \bibfield  {author} {\bibinfo {author} {\bibfnamefont {J.~F.}\ \bibnamefont
  {Fitzsimons}},\ }\bibfield  {title} {\bibinfo {title} {Private quantum
  computation: an introduction to blind quantum computing and related
  protocols},\ }\href {https://doi.org/10.1038/s41534-017-0025-3} {\bibfield
  {journal} {\bibinfo  {journal} {npj Quantum Inf.}\ }\textbf {\bibinfo
  {volume} {3}},\ \bibinfo {pages} {23} (\bibinfo {year} {2017})}\BibitemShut
  {NoStop}%
\bibitem [{\citenamefont {Aharonov}\ \emph {et~al.}(2017)\citenamefont
  {Aharonov}, \citenamefont {Ben-Or}, \citenamefont {Eban},\ and\ \citenamefont
  {Mahadev}}]{aharonov_interactive_2017}%
  \BibitemOpen
  \bibfield  {author} {\bibinfo {author} {\bibfnamefont {D.}~\bibnamefont
  {Aharonov}}, \bibinfo {author} {\bibfnamefont {M.}~\bibnamefont {Ben-Or}},
  \bibinfo {author} {\bibfnamefont {E.}~\bibnamefont {Eban}},\ and\ \bibinfo
  {author} {\bibfnamefont {U.}~\bibnamefont {Mahadev}},\ }\href
  {https://arxiv.org/abs/1704.04487} {\bibinfo {title} {Interactive proofs for
  quantum computations}} (\bibinfo {year} {2017}),\ \bibinfo {note}
  {arXiv:1704.04487}\BibitemShut {NoStop}%
\bibitem [{\citenamefont {Broadbent}(2018)}]{broadbent_how_2018}%
  \BibitemOpen
  \bibfield  {author} {\bibinfo {author} {\bibfnamefont {A.}~\bibnamefont
  {Broadbent}},\ }\bibfield  {title} {\bibinfo {title} {How to verify a quantum
  computation},\ }\href {https://doi.org/10.4086/toc.2018.v014a011} {\bibfield
  {journal} {\bibinfo  {journal} {Theory Comput.}\ }\textbf {\bibinfo {volume}
  {14}},\ \bibinfo {pages} {1} (\bibinfo {year} {2018})}\BibitemShut {NoStop}%
\bibitem [{\citenamefont {Mantri}\ \emph {et~al.}(2017)\citenamefont {Mantri},
  \citenamefont {Demarie},\ and\ \citenamefont
  {Fitzsimons}}]{mantri_universality_2017}%
  \BibitemOpen
  \bibfield  {author} {\bibinfo {author} {\bibfnamefont {A.}~\bibnamefont
  {Mantri}}, \bibinfo {author} {\bibfnamefont {T.~F.}\ \bibnamefont
  {Demarie}},\ and\ \bibinfo {author} {\bibfnamefont {J.~F.}\ \bibnamefont
  {Fitzsimons}},\ }\bibfield  {title} {\bibinfo {title} {Universality of
  quantum computation with cluster states and (x, y)-plane measurements},\
  }\href {https://doi.org/10.1038/srep42861} {\bibfield  {journal} {\bibinfo
  {journal} {Sci. Rep.}\ }\textbf {\bibinfo {volume} {7}},\ \bibinfo {pages}
  {42861} (\bibinfo {year} {2017})}\BibitemShut {NoStop}%
\bibitem [{\citenamefont {Bennett}\ \emph {et~al.}(2001)\citenamefont
  {Bennett}, \citenamefont {DiVincenzo}, \citenamefont {Shor}, \citenamefont
  {Smolin}, \citenamefont {Terhal},\ and\ \citenamefont
  {Wootters}}]{bennett_remote_2001}%
  \BibitemOpen
  \bibfield  {author} {\bibinfo {author} {\bibfnamefont {C.~H.}\ \bibnamefont
  {Bennett}}, \bibinfo {author} {\bibfnamefont {D.~P.}\ \bibnamefont
  {DiVincenzo}}, \bibinfo {author} {\bibfnamefont {P.~W.}\ \bibnamefont
  {Shor}}, \bibinfo {author} {\bibfnamefont {J.~A.}\ \bibnamefont {Smolin}},
  \bibinfo {author} {\bibfnamefont {B.~M.}\ \bibnamefont {Terhal}},\ and\
  \bibinfo {author} {\bibfnamefont {W.~K.}\ \bibnamefont {Wootters}},\
  }\bibfield  {title} {\bibinfo {title} {Remote state preparation},\ }\href
  {https://doi.org/10.1103/PhysRevLett.87.077902} {\bibfield  {journal}
  {\bibinfo  {journal} {Phys. Rev. Lett.}\ }\textbf {\bibinfo {volume} {87}},\
  \bibinfo {pages} {077902} (\bibinfo {year} {2001})}\BibitemShut {NoStop}%
\bibitem [{sup()}]{supplement}%
  \BibitemOpen
  \href@noop {} {\bibinfo {title} {{See Supplemental Material for details on
  experimental methods, the apparatus, and a quantitative analysis of
  information leakage from the client to the server.}}}\BibitemShut {Stop}%
\bibitem [{\citenamefont {Tomescu}(2019)}]{tomescu_qubit_2019}%
  \BibitemOpen
  \bibfield  {author} {\bibinfo {author} {\bibfnamefont {T.-M.}\ \bibnamefont
  {Tomescu}},\ }\emph {\bibinfo {title} {Qubit Encryption by Rotation of
  Polarization States}},\ \href@noop {} {Master's thesis},\ \bibinfo  {school}
  {University of Oxford} (\bibinfo {year} {2019})\BibitemShut {NoStop}%
\bibitem [{\citenamefont {Ziegler}\ and\ \citenamefont
  {Nichols}(1942)}]{ziegler_optimum_1942}%
  \BibitemOpen
  \bibfield  {author} {\bibinfo {author} {\bibfnamefont {J.~G.}\ \bibnamefont
  {Ziegler}}\ and\ \bibinfo {author} {\bibfnamefont {N.~B.}\ \bibnamefont
  {Nichols}},\ }\bibfield  {title} {\bibinfo {title} {Optimum settings for
  automatic controllers},\ }\href@noop {} {\bibfield  {journal} {\bibinfo
  {journal} {Trans. Am. Soc. Mech. Eng.}\ }\textbf {\bibinfo {volume} {64}},\
  \bibinfo {pages} {759} (\bibinfo {year} {1942})}\BibitemShut {NoStop}%
\bibitem [{\citenamefont {Simon}\ \emph {et~al.}(2012)\citenamefont {Simon},
  \citenamefont {Chandrashekar},\ and\ \citenamefont
  {Simon}}]{simon_hamiltons_2012}%
  \BibitemOpen
  \bibfield  {author} {\bibinfo {author} {\bibfnamefont {B.~N.}\ \bibnamefont
  {Simon}}, \bibinfo {author} {\bibfnamefont {C.~M.}\ \bibnamefont
  {Chandrashekar}},\ and\ \bibinfo {author} {\bibfnamefont {S.}~\bibnamefont
  {Simon}},\ }\bibfield  {title} {\bibinfo {title} {Hamilton's turns as a
  visual tool kit for designing single-qubit unitary gates},\ }\href
  {https://doi.org/10.1103/PhysRevA.85.022323} {\bibfield  {journal} {\bibinfo
  {journal} {Phys. Rev. A}\ }\textbf {\bibinfo {volume} {85}},\ \bibinfo
  {pages} {022323} (\bibinfo {year} {2012})}\BibitemShut {NoStop}%
\bibitem [{\citenamefont {\v{R}eh\'a\v{c}ek}\ \emph {et~al.}(2007)\citenamefont
  {\v{R}eh\'a\v{c}ek}, \citenamefont {Hradil}, \citenamefont {Knill},\ and\
  \citenamefont {Lvovsky}}]{rehacek_diluted_2007}%
  \BibitemOpen
  \bibfield  {author} {\bibinfo {author} {\bibfnamefont {J.}~\bibnamefont
  {\v{R}eh\'a\v{c}ek}}, \bibinfo {author} {\bibfnamefont {Z.}~\bibnamefont
  {Hradil}}, \bibinfo {author} {\bibfnamefont {E.}~\bibnamefont {Knill}},\ and\
  \bibinfo {author} {\bibfnamefont {A.~I.}\ \bibnamefont {Lvovsky}},\
  }\bibfield  {title} {\bibinfo {title} {Diluted maximum-likelihood algorithm
  for quantum tomography},\ }\href {https://doi.org/10.1103/PhysRevA.75.042108}
  {\bibfield  {journal} {\bibinfo  {journal} {Phys. Rev. A}\ }\textbf {\bibinfo
  {volume} {75}},\ \bibinfo {pages} {042108} (\bibinfo {year}
  {2007})}\BibitemShut {NoStop}%
\bibitem [{\citenamefont {Blinov}\ \emph {et~al.}(2004)\citenamefont {Blinov},
  \citenamefont {Moehring}, \citenamefont {Duan},\ and\ \citenamefont
  {Monroe}}]{blinov_observation_2004}%
  \BibitemOpen
  \bibfield  {author} {\bibinfo {author} {\bibfnamefont {B.~B.}\ \bibnamefont
  {Blinov}}, \bibinfo {author} {\bibfnamefont {D.~L.}\ \bibnamefont
  {Moehring}}, \bibinfo {author} {\bibfnamefont {L.-M.}\ \bibnamefont {Duan}},\
  and\ \bibinfo {author} {\bibfnamefont {C.}~\bibnamefont {Monroe}},\
  }\bibfield  {title} {\bibinfo {title} {Observation of entanglement between a
  single trapped atom and a single photon},\ }\href
  {https://www.nature.com/articles/nature02377} {\bibfield  {journal} {\bibinfo
   {journal} {Nature}\ }\textbf {\bibinfo {volume} {428}},\ \bibinfo {pages}
  {153} (\bibinfo {year} {2004})}\BibitemShut {NoStop}%
\bibitem [{\citenamefont {Stephenson}\ \emph {et~al.}(2020)\citenamefont
  {Stephenson}, \citenamefont {Nadlinger}, \citenamefont {Nichol},
  \citenamefont {An}, \citenamefont {Drmota}, \citenamefont {Ballance},
  \citenamefont {Thirumalai}, \citenamefont {Goodwin}, \citenamefont {Lucas},\
  and\ \citenamefont {Ballance}}]{stephenson_high-rate_2020}%
  \BibitemOpen
  \bibfield  {author} {\bibinfo {author} {\bibfnamefont {L.~J.}\ \bibnamefont
  {Stephenson}}, \bibinfo {author} {\bibfnamefont {D.~P.}\ \bibnamefont
  {Nadlinger}}, \bibinfo {author} {\bibfnamefont {B.~C.}\ \bibnamefont
  {Nichol}}, \bibinfo {author} {\bibfnamefont {S.}~\bibnamefont {An}}, \bibinfo
  {author} {\bibfnamefont {P.}~\bibnamefont {Drmota}}, \bibinfo {author}
  {\bibfnamefont {T.~G.}\ \bibnamefont {Ballance}}, \bibinfo {author}
  {\bibfnamefont {K.}~\bibnamefont {Thirumalai}}, \bibinfo {author}
  {\bibfnamefont {J.~F.}\ \bibnamefont {Goodwin}}, \bibinfo {author}
  {\bibfnamefont {D.~M.}\ \bibnamefont {Lucas}},\ and\ \bibinfo {author}
  {\bibfnamefont {C.~J.}\ \bibnamefont {Ballance}},\ }\bibfield  {title}
  {\bibinfo {title} {High-rate, high-fidelity entanglement of qubits across an
  elementary quantum network},\ }\href
  {https://doi.org/10.1103/PhysRevLett.124.110501} {\bibfield  {journal}
  {\bibinfo  {journal} {Phys. Rev. Lett.}\ }\textbf {\bibinfo {volume} {124}},\
  \bibinfo {pages} {110501} (\bibinfo {year} {2020})}\BibitemShut {NoStop}%
\bibitem [{\citenamefont {Wright}\ \emph {et~al.}(2019)\citenamefont {Wright},
  \citenamefont {Beck}, \citenamefont {Debnath}, \citenamefont {Amini},
  \citenamefont {Nam}, \citenamefont {Grzesiak}, \citenamefont {Chen},
  \citenamefont {Pisenti}, \citenamefont {Chmielewski}, \citenamefont
  {Collins}, \citenamefont {Hudek}, \citenamefont {Mizrahi}, \citenamefont
  {Wong-Campos}, \citenamefont {Allen}, \citenamefont {Apisdorf}, \citenamefont
  {Solomon}, \citenamefont {Williams}, \citenamefont {Ducore}, \citenamefont
  {Blinov}, \citenamefont {Kreikemeier}, \citenamefont {Chaplin}, \citenamefont
  {Keesan}, \citenamefont {Monroe},\ and\ \citenamefont
  {Kim}}]{wright_benchmarking_2019}%
  \BibitemOpen
  \bibfield  {author} {\bibinfo {author} {\bibfnamefont {K.}~\bibnamefont
  {Wright}}, \bibinfo {author} {\bibfnamefont {K.~M.}\ \bibnamefont {Beck}},
  \bibinfo {author} {\bibfnamefont {S.}~\bibnamefont {Debnath}}, \bibinfo
  {author} {\bibfnamefont {J.~M.}\ \bibnamefont {Amini}}, \bibinfo {author}
  {\bibfnamefont {Y.}~\bibnamefont {Nam}}, \bibinfo {author} {\bibfnamefont
  {N.}~\bibnamefont {Grzesiak}}, \bibinfo {author} {\bibfnamefont {J.-S.}\
  \bibnamefont {Chen}}, \bibinfo {author} {\bibfnamefont {N.~C.}\ \bibnamefont
  {Pisenti}}, \bibinfo {author} {\bibfnamefont {M.}~\bibnamefont
  {Chmielewski}}, \bibinfo {author} {\bibfnamefont {C.}~\bibnamefont
  {Collins}}, \bibinfo {author} {\bibfnamefont {K.~M.}\ \bibnamefont {Hudek}},
  \bibinfo {author} {\bibfnamefont {J.}~\bibnamefont {Mizrahi}}, \bibinfo
  {author} {\bibfnamefont {J.~D.}\ \bibnamefont {Wong-Campos}}, \bibinfo
  {author} {\bibfnamefont {S.}~\bibnamefont {Allen}}, \bibinfo {author}
  {\bibfnamefont {J.}~\bibnamefont {Apisdorf}}, \bibinfo {author}
  {\bibfnamefont {P.}~\bibnamefont {Solomon}}, \bibinfo {author} {\bibfnamefont
  {M.}~\bibnamefont {Williams}}, \bibinfo {author} {\bibfnamefont {A.~M.}\
  \bibnamefont {Ducore}}, \bibinfo {author} {\bibfnamefont {A.}~\bibnamefont
  {Blinov}}, \bibinfo {author} {\bibfnamefont {S.~M.}\ \bibnamefont
  {Kreikemeier}}, \bibinfo {author} {\bibfnamefont {V.}~\bibnamefont
  {Chaplin}}, \bibinfo {author} {\bibfnamefont {M.}~\bibnamefont {Keesan}},
  \bibinfo {author} {\bibfnamefont {C.}~\bibnamefont {Monroe}},\ and\ \bibinfo
  {author} {\bibfnamefont {J.}~\bibnamefont {Kim}},\ }\bibfield  {title}
  {\bibinfo {title} {Benchmarking an 11-qubit quantum computer},\ }\href
  {https://doi.org/10.1038/s41467-019-13534-2} {\bibfield  {journal} {\bibinfo
  {journal} {Nat. Commun.}\ }\textbf {\bibinfo {volume} {10}},\ \bibinfo
  {pages} {5464} (\bibinfo {year} {2019})}\BibitemShut {NoStop}%
\bibitem [{\citenamefont {Keller}\ \emph {et~al.}(2019)\citenamefont {Keller},
  \citenamefont {Burgermeister}, \citenamefont {Kalincev}, \citenamefont
  {Didier}, \citenamefont {Kulosa}, \citenamefont {Nordmann}, \citenamefont
  {Kiethe},\ and\ \citenamefont {Mehlst\"aubler}}]{keller_controlling_2019}%
  \BibitemOpen
  \bibfield  {author} {\bibinfo {author} {\bibfnamefont {J.}~\bibnamefont
  {Keller}}, \bibinfo {author} {\bibfnamefont {T.}~\bibnamefont
  {Burgermeister}}, \bibinfo {author} {\bibfnamefont {D.}~\bibnamefont
  {Kalincev}}, \bibinfo {author} {\bibfnamefont {A.}~\bibnamefont {Didier}},
  \bibinfo {author} {\bibfnamefont {A.~P.}\ \bibnamefont {Kulosa}}, \bibinfo
  {author} {\bibfnamefont {T.}~\bibnamefont {Nordmann}}, \bibinfo {author}
  {\bibfnamefont {J.}~\bibnamefont {Kiethe}},\ and\ \bibinfo {author}
  {\bibfnamefont {T.~E.}\ \bibnamefont {Mehlst\"aubler}},\ }\bibfield  {title}
  {\bibinfo {title} {Controlling systematic frequency uncertainties at the
  ${10}^{\ensuremath{-}19}$ level in linear coulomb crystals},\ }\href
  {https://doi.org/10.1103/PhysRevA.99.013405} {\bibfield  {journal} {\bibinfo
  {journal} {Phys. Rev. A}\ }\textbf {\bibinfo {volume} {99}},\ \bibinfo
  {pages} {013405} (\bibinfo {year} {2019})}\BibitemShut {NoStop}%
\bibitem [{\citenamefont {Hughes}\ \emph {et~al.}(2020)\citenamefont {Hughes},
  \citenamefont {Sch\"afer}, \citenamefont {Thirumalai}, \citenamefont
  {Nadlinger}, \citenamefont {Woodrow}, \citenamefont {Lucas},\ and\
  \citenamefont {Ballance}}]{hughes_benchmarking_2020}%
  \BibitemOpen
  \bibfield  {author} {\bibinfo {author} {\bibfnamefont {A.~C.}\ \bibnamefont
  {Hughes}}, \bibinfo {author} {\bibfnamefont {V.~M.}\ \bibnamefont
  {Sch\"afer}}, \bibinfo {author} {\bibfnamefont {K.}~\bibnamefont
  {Thirumalai}}, \bibinfo {author} {\bibfnamefont {D.~P.}\ \bibnamefont
  {Nadlinger}}, \bibinfo {author} {\bibfnamefont {S.~R.}\ \bibnamefont
  {Woodrow}}, \bibinfo {author} {\bibfnamefont {D.~M.}\ \bibnamefont {Lucas}},\
  and\ \bibinfo {author} {\bibfnamefont {C.~J.}\ \bibnamefont {Ballance}},\
  }\bibfield  {title} {\bibinfo {title} {Benchmarking a high-fidelity
  mixed-species entangling gate},\ }\href
  {https://doi.org/10.1103/PhysRevLett.125.080504} {\bibfield  {journal}
  {\bibinfo  {journal} {Phys. Rev. Lett.}\ }\textbf {\bibinfo {volume} {125}},\
  \bibinfo {pages} {080504} (\bibinfo {year} {2020})}\BibitemShut {NoStop}%
\bibitem [{\citenamefont {Krutyanskiy}\ \emph {et~al.}(2019)\citenamefont
  {Krutyanskiy}, \citenamefont {Meraner}, \citenamefont {Schupp}, \citenamefont
  {Krcmarsky}, \citenamefont {Hainzer},\ and\ \citenamefont
  {Lanyon}}]{krutyanskiy_light-matter_2019}%
  \BibitemOpen
  \bibfield  {author} {\bibinfo {author} {\bibfnamefont {V.}~\bibnamefont
  {Krutyanskiy}}, \bibinfo {author} {\bibfnamefont {M.}~\bibnamefont
  {Meraner}}, \bibinfo {author} {\bibfnamefont {J.}~\bibnamefont {Schupp}},
  \bibinfo {author} {\bibfnamefont {V.}~\bibnamefont {Krcmarsky}}, \bibinfo
  {author} {\bibfnamefont {H.}~\bibnamefont {Hainzer}},\ and\ \bibinfo {author}
  {\bibfnamefont {B.~P.}\ \bibnamefont {Lanyon}},\ }\bibfield  {title}
  {\bibinfo {title} {Light-matter entanglement over 50 km of optical fibre},\
  }\href {https://doi.org/10.1038/s41534-019-0186-3} {\bibfield  {journal}
  {\bibinfo  {journal} {npj Quantum Inf.}\ }\textbf {\bibinfo {volume} {5}},\
  \bibinfo {pages} {72} (\bibinfo {year} {2019})}\BibitemShut {NoStop}%
\bibitem [{\citenamefont {Fokoua}\ \emph {et~al.}(2023)\citenamefont {Fokoua},
  \citenamefont {Mousavi}, \citenamefont {Jasion}, \citenamefont {Richardson},\
  and\ \citenamefont {Poletti}}]{numkam_loss_2023}%
  \BibitemOpen
  \bibfield  {author} {\bibinfo {author} {\bibfnamefont {E.~N.}\ \bibnamefont
  {Fokoua}}, \bibinfo {author} {\bibfnamefont {S.~A.}\ \bibnamefont {Mousavi}},
  \bibinfo {author} {\bibfnamefont {G.~T.}\ \bibnamefont {Jasion}}, \bibinfo
  {author} {\bibfnamefont {D.~J.}\ \bibnamefont {Richardson}},\ and\ \bibinfo
  {author} {\bibfnamefont {F.}~\bibnamefont {Poletti}},\ }\bibfield  {title}
  {\bibinfo {title} {Loss in hollow-core optical fibers: mechanisms, scaling
  rules, and limits},\ }\href {https://doi.org/10.1364/AOP.470592} {\bibfield
  {journal} {\bibinfo  {journal} {Adv. Opt. Photon.}\ }\textbf {\bibinfo
  {volume} {15}},\ \bibinfo {pages} {1} (\bibinfo {year} {2023})}\BibitemShut
  {NoStop}%
\bibitem [{\citenamefont {Bourdeauducq}\ \emph {et~al.}(2021)\citenamefont
  {Bourdeauducq} \emph {et~al.}}]{ARTIQ}%
  \BibitemOpen
  \bibfield  {author} {\bibinfo {author} {\bibfnamefont {S.}~\bibnamefont
  {Bourdeauducq}} \emph {et~al.},\ }\href
  {https://doi.org/10.5281/zenodo.1492176} {\bibinfo {title} {{m-labs/artiq:
  6.0 (Version 6.0)}}} (\bibinfo {year} {2021})\BibitemShut {NoStop}%
\end{thebibliography}
\end{document}